\documentclass[journal]{IEEEtran}
\ifCLASSINFOpdf
\else
\fi
\usepackage{graphicx}
\usepackage{amsmath}
\usepackage{amsfonts}
\usepackage{amssymb}
\usepackage{subcaption}
\usepackage{amsthm}
\newtheorem{proposition}{Proposition}

\theoremstyle{plain}

\newtheorem{lemma}{Lemma}

\theoremstyle{definition}
\newtheorem{assumption}{Assumption}

\theoremstyle{remark}
\newtheorem{remark}{Remark}

\usepackage{amsmath}

\begin{document}
%
\title{Adaptive Pilot Selection for Unified Semantic Communication and Semantic Sensing in ISAC}
%
%
%

\author{Muhammad Abubakar Rashid,
        Muhammad Hannan Akram,\\
        Haejoon Jung,~\IEEEmembership{Senior Member,~IEEE}
        and Syed Ali Hassan,~\IEEEmembership{Senior Member,~IEEE}
\thanks{Muhammad Abubakar Rashid, Muhammad Hannan Akram and Syed Ali Hassan are with the School of Electrical Engineering and Computer Science, National University of Sciences and Technology, Islamabad, Pakistan (e-mails: \{mrashid.bsds23seecs, makram.bsds23seecs, ali.hassan\}@seecs.edu.pk)}
\thanks{Haejoon Jung is with the Department of Electronic Engineering, Kyung Hee University, Yongin 17104, Republic of Korea (e-mail: haejoonjung@khu.ac.kr).}%
}
\maketitle

\begin{abstract}
Semantic communication (SemCom) and integrated sensing and communication (ISAC) are promising technologies for future 6G wireless networks. Existing studies have applied semantic technology to either the communication module or the sensing module of ISAC. In
this work, we propose \textbf{SemISAC}, which performs both SemCom and semantic sensing within a single dual-function waveform. SemISAC uses a joint semantic encoder that extracts task-specific information for both communication and sensing. We evaluate SemISAC in a vehicular scenario in which vehicles share pixel-wise segmentation of the road environment and, through sensing, classify surrounding objects and estimate their ranges. On the transmitter side, a deep learning encoder converts the input road-scene image into semantic symbols and places them on the data cells of an OFDM grid, while the remaining cells serve as pilots for channel state information estimation and sensing. The pilot configuration is adaptively optimized based on the channel conditions to balance communication and sensing requirements. At the receiver, a deep learning model reconstructs the segmentation from the received waveform, while the transmitting vehicle captures the reflected waveforms from surrounding objects and uses task-specific deep learning decoders for target recognition and range estimation. Simulation results show that SemISAC achieves a segmentation accuracy close to that of the dedicated SemCom module while outperforming both
conventional and semantic baselines in target recognition and range estimation.
\end{abstract}


%
\IEEEpeerreviewmaketitle

\section{Introduction}

\IEEEPARstart{I}{ntegrated} sensing and communication (ISAC) combines ubiquitous connectivity and advanced sensing capabilities in a unified framework and is critical to autonomous vehicles, smart cities, non-terrestrial networks (NTNs), and industrial automation, thereby emerging as a key enabler for future 6G networks \cite{ISAC1}, \cite{ISAC2}. In addition to ISAC, semantic communication (SemCom) has also attracted significant interest as a fundamental component of 6G networks, as it shifts the design objective from bit transmission to meaning exchange, facilitating the introduction of high-level semantics in wireless networks \cite{semcom1}, \cite{semcom2}. When combined, these paradigms enable wireless systems to not only sense and communicate information but also understand and extract its underlying meaning.

Systems incorporating SemCom into ISAC systems have attracted significant attention in the recent literature, focusing primarily on conveying semantic features learned from the data using deep learning (DL) techniques. By doing so, and by learning the properties of the wireless channel, SemCom enables robust transmissions and reliable task execution. Moreover, the tradeoff between sensing and communication is also mitigated in SemCom ISAC systems, which maintain downstream task accuracy comparable to that of conventional systems while improving sensing performance. Studies such as \cite{semISAC1}, \cite{semISAC2}, \cite{semISAC3}, \cite{semISAC4}, \cite{semISAC5}, and \cite{semISAC6} explore the use of SemCom for different purposes in ISAC systems. Reference \cite{semISAC1} presents a framework that combines SemCom with multimodal radar and visual sensing in a unified ISAC system. A semantic-based channel state information (CSI) feedback scheme was proposed in \cite{semISAC2}, which replaces full channel data transmission with lightweight semantic database labels to reduce the feedback burden for autonomous-aerial-vehicle-assisted ISAC tasks. In \cite{semISAC3}, a framework that jointly optimizes semantic symbol quantization, modulation order, power allocation, and dual-function radar-communication beamforming is proposed for image-based SemCom. Reference \cite{semISAC4} proposes a semantic-importance-guided deep reinforcement learning framework for multimodal SemCom in UAV-based ISAC, while \cite{semISAC6} incorporates SemCom into UAV-RIS-assisted ISAC by optimizing the transmission of task-relevant semantic symbols under semantic accuracy and secrecy constraints. The work in \cite{semISAC5} focuses on security-aware semantic ISAC, utilizing paired adversarial residual networks to protect semantic information from eavesdropping.

In addition to the rapid evolution of SemCom, enabling the same level of intelligence in the sensing domain is also an active area of research. The idea of semantic sensing (SemSens) has its theoretical foundations in cognitive radar \cite{semsens1}, which established a closed feedback loop between the transmitter and receiver. Subsequently, many studies were conducted to adjust sensing strategies based on environmental knowledge, such as adaptive waveform design and predictive beamforming \cite{semsens5}. In \cite{semsens2}, the authors developed a sensing-assisted beamforming technique that employs particle filtering and particle swarm optimization (PSO) and utilizes multipath echoes for vehicle tracking and beamforming optimization. Similarly, an adaptive waveform design method was proposed in \cite{semsens3}, which maximized the signal-to-clutter-plus-noise ratio (SCNR) under a constant-modulus constraint and optimized target detection performance through energy allocation. Reference \cite{semsens4} addressed the joint optimization of bandwidth allocation for sensing and data transmission in SemCom systems by applying the projected gradient method to maximize semantic spectrum efficiency. Although sensing strategies have advanced noticeably, the optimization goals have, for the most part, been confined to low-level statistical signal metrics such as normalized mean-squared error (NMSE) and peak signal-to-noise ratio (PSNR). Motivated by the need for high-level semantics in intelligent systems, there has been a gradual shift in the focus of radar sensing from low-level sensing metrics to task-oriented design. A framework leveraging graph structures and temporal features was introduced in \cite{semsens6} to perform semantic segmentation on sparse sequential point clouds from millimeter-wave (mmWave) radar, while Reference \cite{semsens7} proposed a semantic simultaneous localization and mapping (SLAM) algorithm using radar-vision fusion to generate semantic grid maps. Likewise, micro-Doppler signatures have been used to perform human activity recognition (HAR) in various scenarios \cite{semsens8}, \cite{semsens9}. 

Reference \cite{semsensmain} introduced the SemSens paradigm based on the information bottleneck principle, in which the authors jointly optimized transmit waveform parameters and receiver representations using a deep learning framework for task-oriented design. SemSens considers a task-oriented representation of the wireless propagation environment by transforming a high-dimensional physical channel observation into a semantic representation \cite{semsens10}. Conventional sensing models explicitly characterize all multipath components through parameters such as delay, Doppler, angle, amplitude, and phase. On the other hand, SemSens retains only those attributes that are relevant to the task. Therefore, the resulting semantic channel can be considered a generalized mapping between the transmitted waveform and the task-relevant representation of the environment, such that irrelevant propagation components are treated as nuisance terms. This semantic extraction forms a many-to-one dimensionality reduction of the physical channel and can eliminate irrelevant channel information while retaining the information required for the sensing task \cite{semsens11}. From an information-theoretic perspective, the data-processing inequality given in \cite{semsensmain} implies that the semantic representation cannot contain more information about the underlying environmental state than the original physical observation. However, the discarded information does not compromise task performance when the extracted information is sufficient for the task. Therefore, SemSens shifts the objective from reconstructing the complete physical channel to extracting a task-relevant representation of the environment.

Despite advances in SemCom and SemSens, existing studies treat them separately in ISAC systems. Semantic processing of communication information and semantic extraction of sensing information have largely been independent processes, even though both tasks share the same propagation environment, waveform, and limited resources. This separation prevents exploiting their inherent interaction at the semantic level. SemCom discards information irrelevant to communication, while SemSens eliminates information unnecessary for sensing. When both coexist, independently determining task-relevant information leads to redundant processing and fails to exploit semantic information shared across tasks. A unified semantic ISAC framework would instead treat communication and sensing as interdependent semantic tasks, jointly determining task-relevant information. This could establish a common semantic representation of the wireless environment and transmitted information, selectively preserving relevant information while discarding the irrelevant. Since the semantic representations required by communication and sensing may overlap significantly, information extracted for one task can benefit the other, improving overall ISAC efficiency. This motivates a unified framework performing both communication and sensing at the semantic level, rather than incorporating semantic processing into only one functionality.

This gap motivates a unified semantic ISAC framework in which SemCom and SemSens are considered jointly. We introduce a semantic ISAC system that simultaneously extracts task-relevant information for both communication and sensing, using a combined encoder for SemCom and SemSens along with adaptive pilot placement and waveform design. We mathematically formulate the joint representation of SemCom and SemSens and propose a pilot selection strategy that facilitates both tasks. Our main contributions are summarized as follows.
\begin{itemize}
    \item We propose SemISAC, a fully semantic approach that unifies SemSens and SemCom and shifts the design objective to task-oriented communication and sensing. Specifically, we formulate the unified SemISAC problem mathematically and establish the joint encoder architecture. In addition, our framework employs joint optimization for the entire pipeline to address the interrelated SemSens and SemCom losses.
    \item We present an adaptive pilot strategy in order to maximize the performance of the SemISAC system. For this purpose, we frame the pilot selection mathematically and adopt a straight-through estimator strategy to facilitate the flow of gradients and train the adaptive pilot selection network.
    \item We instantiate the proposed SemISAC framework and evaluate its performance across multiple communication and sensing tasks. We validate SemCom and SemSens separately with communication and sensing baselines and compare the performance of the proposed approach across different power and subcarrier resources.
\end{itemize}



\begin{figure*}
    \centering
    \includegraphics[width=1\linewidth]{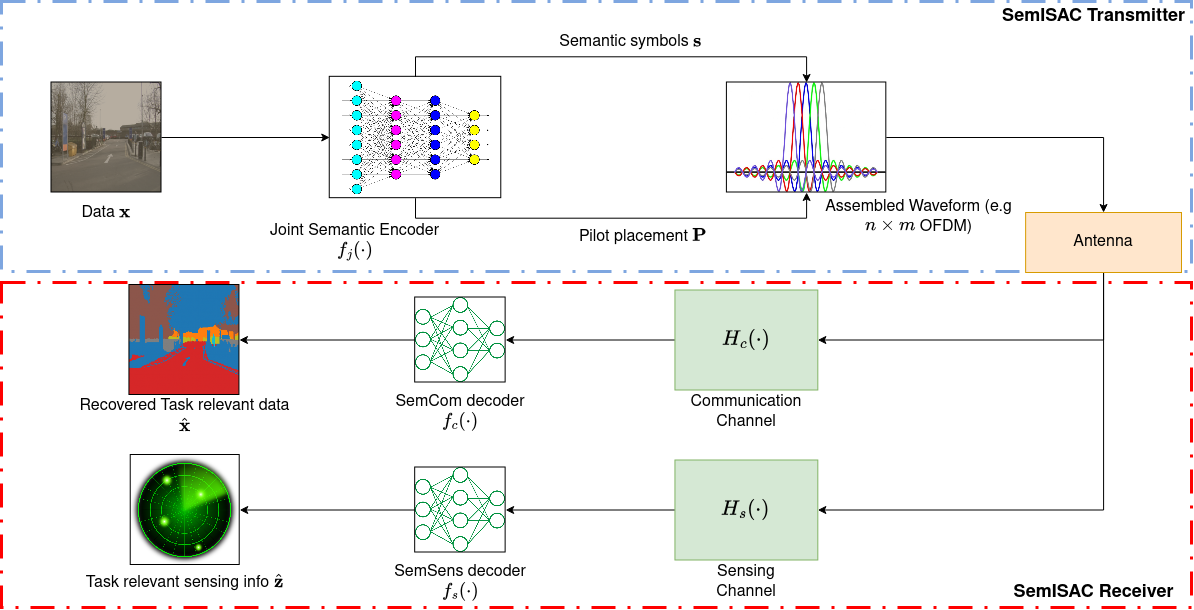}
    \caption{Proposed SemISAC system.}
    \label{fig:placeholder}
\end{figure*}

\section{Problem Formulation and Design Principle}
\label{formulation}
This section formulates the proposed semantic ISAC framework as a joint optimization problem involving SemCom, SemSens, and resource allocation. First, the joint SemCom-SemSens system model is established, where a common semantic representation is designed to support both communication and sensing tasks. The pilot selection problem is then formulated to characterize the tradeoff between sensing performance and the communication resources available in the TF grid, motivating an adaptive allocation strategy. A joint representation learning principle is then developed in which the semantic representation and resource allocation are optimized according to their collective contribution to downstream task performance. Together, these formulations provide the theoretical foundation for the SemISAC system with adaptive pilot selection developed in the subsequent sections.

\subsection{Joint Semantic ISAC Problem}
We consider an ISAC system designed to jointly support both SemCom and SemSens, rather than being dedicated to a single functionality. The proposed framework consists of a joint semantic encoder, a channel model, and separate decoders for SemCom and SemSens. Let $\mathbf{x}$ represent the real-valued source data or observation that needs to be communicated and $\mathbf{z}$ denote the environmental state relevant to sensing. The joint encoder is used to generate the complex-valued semantic representation of the source information
\begin{equation}
    \mathbf{s} = f_j(\mathbf{x}),
    \label{eq:eq1}
\end{equation}
where $f_j(\cdot)$ represents the joint encoder. This representation serves as the common transmitted representation for both communication and sensing and is processed through the relevant channels to produce the observations $Y_c$ and $Y_s$:
\begin{equation}
    Y_c = H_c(\mathbf{s}), \qquad Y_s = H_s(\mathbf{s},\mathbf{z}),
\end{equation}
where $H_c(\cdot)$ and $H_s(\cdot)$ represent the communication and sensing channels, respectively. After being passed through the communication decoder $f_c(\cdot)$ and the sensing decoder $f_s(\cdot)$, respectively, the task outputs are
\begin{equation}
\hat{\mathbf{x}}= f_c(Y_c), \qquad \hat{\mathbf{z}} = f_s(Y_s),
\end{equation}
where $\hat{\mathbf{x}}$ and $\hat{\mathbf{z}}$ are the recovered task-relevant SemCom and SemSens information, respectively. A key distinction of this framework is that neither task requires the reconstruction of the complete transmitted waveform or channel. Instead, each decoder is optimized to extract the semantic information relevant to its task. Consequently, the fundamental objective is to design a unified transmitted representation that simultaneously preserves the semantic information required for communication and the task-oriented physical information required for sensing. This formulation highlights the central challenge of semantic ISAC: learning a common representation that effectively captures and balances the information requirements of both SemCom and SemSens while avoiding unnecessary transmission of task-irrelevant information.

\subsection{Pilot Selection}
In ISAC systems, sensing relies on reference signals, commonly referred to as pilot signals, whose known waveforms and transmission parameters allow the receiver to estimate the wireless channel. In SemSens, the role of pilot signals is even more important because the amount of information extracted from the received signal is deliberately constrained by the semantic objective. Therefore, pilot placement must be designed to provide the SemSens decoder with the information most relevant to the task, while suppressing information arising from irrelevant channel variation and noise. Since the total number of resource elements (REs) in a single waveform is limited, allocating more pilots for sensing necessarily reduces the REs available for communication, creating a resource-allocation problem.

Let $P_i \in \{0, 1\}$ denote the allocation of a resource element indexed by resource unit $i$, where
\begin{equation}
P_{i} =
\begin{cases}
1, & \text{ allocated for sensing},\\
0, & \text{ allocated for communication},
\end{cases}
\end{equation}
Accordingly, the total number of resources allocated to sensing can be expressed as 
\begin{equation}
    N_p(\mathbf{P}) = \sum_{i} P_{i},
\end{equation}
while the remaining resources are available for communication
\begin{equation}
    N_d(\mathbf{P}) = \sum_{i} (1 - P_{i}).
\end{equation}
Thus the pilot placement matrix $\mathbf{P}$ directly determines the division of available time frequency resources between sensing and communication. To capture the fact that the ultimate objective is task performance rather than conventional physical layer accuracy, we define task oriented quality measures for communication and sensing. In particular, let
\begin{equation}
    Q_c(\mathbf{P}) = -\mathcal{L}_c(\mathbf{P}),
\end{equation}
where $\mathcal{L}_c$ denotes the communication related semantic loss such as the error associated with semantic reconstruction or decoding. Similarly,
\begin{equation}
    Q_s(\mathbf{P}) = -\mathcal{L}_s(\mathbf{P}),
\end{equation}
where $\mathcal{L}_s$ represents the sensing loss which may quantify errors in sensing tasks such as object classification, localization or range estimation. Defining the quality measures in terms of downstream task losses allows the value of a pilot resource to be evaluated according to the information it provides for the intended SemSens and SemCom tasks. The optimal pilot allocation can therefore be formulated abstractly as
\begin{equation}
    \mathbf{P}^\star = \arg\max_{\mathbf{P}} \mathcal{U}\bigl(Q_c(\mathbf{P}), Q_s(\mathbf{P})\bigr),
\end{equation}
subject to the available resource and power constraints. A simple realization of the utility function is a weighted combination of the two task oriented quality measures,
\begin{equation}
    \mathcal{U} = \alpha Q_c + (1-\alpha)Q_s,\qquad 0 \leq \alpha \leq 1,
\end{equation}
where $\alpha$ determines the relative importance of SemCom and sensing. More generally, $\mathcal{U}(\cdot)$ can be selected to represent application specific priorities or quality of service requirements.

The above formulation highlights an important property of the pilot placement problem that the optimal allocation is not necessarily fixed across different channel realizations or task requirements. The usefulness of a particular resource element for sensing depends on the underlying channel and propagation conditions, while its usefulness for SemCom depends on the information required by the downstream semantic task. Consequently, the optimal allocation can be expressed as a function of the current system state,
\begin{equation}
    \mathbf{P}^\star = \mathbf{P}^\star(\mathbf{x}, \mathbf{z})
\end{equation}
Therefore, a fixed pilot pattern designed independently of the instantaneous channel and task conditions cannot, in general, achieve the optimal trade off between sensing and SemCom. Instead, the pilot configuration should adapt to the current system state so that sensing resources are concentrated where they provide the greatest task oriented benefit, while unnecessary pilot resources are avoided.

This state dependence can be made explicit by considering the effect of the
instantaneous SNR on the sensing-communication resource tradeoff.
\begin{assumption}[Noise limited pilot count model]\label{as:linear-pilots}
In the noise-limited regime, the sensing distortion decreases proportionally
with the effective pilot SNR resource $N_p\gamma$, while the communication
distortion decreases proportionally with the rate $\log(1+\gamma)$ available
to the remaining $N_c-N_p$ symbols. Specifically, there exist constants
$a,c>0$, independent of $N_p$ and $\gamma$, such that
\begin{equation}
D_s(N_p,\gamma)=\frac{a}{N_p\gamma},\qquad
D_c(N_p,\gamma)=\frac{c}{(N_c-N_p)\log(1+\gamma)}.
\end{equation}
\end{assumption}

Under Assumption~\ref{as:linear-pilots}, the sensing distortion decreases
proportionally to $1/\gamma$ while the communication distortion decreases
more slowly, proportionally to $1/\log(1+\gamma)$. Consequently, the marginal
benefit of allocating an additional resource element to sensing changes with
SNR, and so does the optimal sensing resource count.
\begin{proposition}[Optimal pilot count decreases with SNR]\label{thm:npilots}
Under Assumption~\ref{as:linear-pilots}, for any $w\in(0,1)$ the weighted
distortion $J(N_p)=w D_s(N_p,\gamma)+(1-w)D_c(N_p,\gamma)$ has a unique
minimizer
\begin{equation}
N_p^\star(\gamma)=N_c\,
\frac{\sqrt{wa/\gamma}}{\sqrt{wa/\gamma}+\sqrt{(1-w)c/\log(1+\gamma)}},
\label{eq:npstar}
\end{equation}
which is strictly decreasing in $\gamma$, with $N_p^\star(\gamma)\to0$ as
$\gamma\to\infty$. The proof is given in Appendix~\ref{app:proof-npilots}.
\end{proposition}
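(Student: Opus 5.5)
Treat $N_p$ as a continuous variable on the open interval $(0,N_c)$ and write $J(N_p)=A/N_p+B/(N_c-N_p)$, where $A=wa/\gamma>0$ and $B=(1-w)c/\log(1+\gamma)>0$. These constants are positive for every $w\in(0,1)$ and $\gamma>0$, by Assumption~\ref{as:linear-pilots}.

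\textbf{Existence and uniqueness.} $J$ is smooth on $(0,N_c)$ and diverges to $+\infty$ at both endpoints. Its second derivative is
\begin{equation*}
J''(N_p)=\frac{2A}{N_p^{3}}+\frac{2B}{(N_c-N_p)^{3}}>0,
\end{equation*}
so $J$ is strictly convex. A strictly convex function that blows up at both ends has exactly one minimizer, namely the unique zero of $J'$. Setting
\begin{equation*}
J'(N_p)=-\frac{A}{N_p^{2}}+\frac{B}{(N_c-N_p)^{2}}=0
\end{equation*}
and taking positive square roots (legitimate because both $N_p$ and $N_c-N_p$ are positive) gives $\sqrt{B}\,N_p=\sqrt{A}\,(N_c-N_p)$. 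Hence
\begin{equation*}
N_p^\star=N_c\,\frac{\sqrt A}{\sqrt A+\sqrt B},
\end{equation*}
which is \eqref{eq:npstar} once $A$ and $B$ are substituted back.

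\textbf{Monotonicity.} Rewrite the minimizer as
\begin{equation*}
N_p^\star=\frac{N_c}{1+\sqrt{B/A}}, \qquad \frac{B}{A}=\frac{(1-w)c}{wa}\cdot\frac{\gamma}{\log(1+\gamma)}.
\end{equation*}
So $N_p^\star$ is strictly decreasing in $\gamma$ exactly when $g(\gamma)=\gamma/\log(1+\gamma)$ is strictly increasing on $\gamma>0$. This is the only step needing real care. I would compute
\begin{equation*}
g'(\gamma)=\frac{\log(1+\gamma)-\gamma/(1+\gamma)}{\log^{2}(1+\gamma)}.
\end{equation*}
Let $h(\gamma)=\log(1+\gamma)-\gamma/(1+\gamma)$. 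Then $h(0)=0$ and $h'(\gamma)=\gamma/(1+\gamma)^2>0$, so $h>0$ for $\gamma>0$, and therefore $g'>0$.

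\textbf{Limit.} Since $g(\gamma)\to\infty$ as $\gamma\to\infty$ (because $\log(1+\gamma)=o(\gamma)$), we get $\sqrt{B/A}\to\infty$ and hence $N_p^\star\to0$.

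I would add a remark that the result concerns the continuous relaxation. For integer $N_p$, convexity means the optimum is one of the two integers adjacent to $N_p^\star$.
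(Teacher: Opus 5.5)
Your proposal is correct and follows the paper's proof essentially step for step. You use the same reduction to $J(N_p)=A/N_p+B/(N_c-N_p)$ with a strict-convexity and endpoint-blowup argument (which you derive directly, while the paper cites Lemma~\ref{thm:rho}), the same rewriting $N_p^\star=N_c(1+\sqrt{B/A})^{-1}$, and the same monotonicity argument through $g(\gamma)=\gamma/\log(1+\gamma)$ and $h(\gamma)=\log(1+\gamma)-\gamma/(1+\gamma)$; your closing remark on integer $N_p$ is a sound addition that the paper leaves implicit.
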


This result provides a theoretical justification for making pilot allocation
SNR-aware. As the channel becomes less noise-limited, fewer pilot resources
are required to meet the sensing objective, leaving more resource elements
available for SemCom.

However, directly solving the above binary optimization problem for every channel realization can be computationally expensive, particularly when the number of available TF resources is large. The monotonic SNR dependence established by Proposition~\ref{thm:npilots} further suggests that the pilot allocation policy should explicitly account for the operating SNR rather than relying on a fixed pilot count. This motivates a learning based approach in which a neural network learns an adaptive pilot allocation policy. Specifically, the allocation can be represented as
\begin{equation}
    \mathbf{P} = f_{\boldsymbol{\theta}}(\mathbf{x},\mathbf{z}),
    \label{eq:eq2}
\end{equation}
where $f_{\boldsymbol{\theta}}(\cdot)$ denotes the proposed neural network based allocation function parameterized by $\boldsymbol{\theta}$. The network therefore learns to map the observed system state to an appropriate pilot configuration,
\begin{equation}
    f_{\boldsymbol{\theta}}(\mathbf{x},\mathbf{z}) \approx \mathbf{P}^\star(\mathbf{x},\mathbf{z}).
\end{equation}

In this way, the neural network does not replace the underlying resource-allocation objective rather it provides a computationally efficient approximation to the optimal adaptive policy. The resulting pilot placement mechanism can dynamically adjust the sensing resources according to channel conditions and the relative requirements of the sensing and SemCom tasks, thereby providing the theoretical basis for the proposed adaptive pilot selection framework.

\subsection{Joint Representation Principle}
The preceding formulations establish that the transmitted signal must preserve the semantic information required for communication while simultaneously preserving sufficient task-relevant information for sensing. These objectives are coupled through the finite TF resources available for transmission. Consequently, the central design problem is not simply to construct an encoder for each task independently, but to learn a unified representation that jointly serves the heterogeneous requirements of SemCom and SemSens.

Let $\mathbf{s}$ denote the latent semantic representation extracted from the source data, as in \eqref{eq:eq1}, and let $\mathbf{P}$ be a matrix that denotes the sensing-resource allocation pattern over the available TF resource elements as before. The resulting transmitted TF representation $\mathbf{X}$ can be abstractly written as
\begin{equation}
\mathbf{X} = \mathcal{A}(\mathbf{s},\mathbf{P}),
\end{equation}
where $\mathcal{A}(\cdot)$ represents the mechanism that maps the semantic representation and the selected sensing resources onto the transmitted TF grid. Here, $\mathbf{s}$ captures the information that should be conveyed through the waveform, whereas $\mathbf{P}$ determines where explicit sensing resources are introduced. The joint representation principle can therefore be formulated as a multi-objective optimization problem. Specifically, the parameters of the semantic encoder and task-specific decoders, together with the sensing-resource allocation, are jointly optimized according to
\begin{equation}
\min_{\boldsymbol{\theta},\boldsymbol{\phi},\boldsymbol{\psi},\mathbf{P}}
\quad
\lambda_{\mathrm{c}}\mathcal{L}_{\mathrm{c}}(\mathbf{P})
+
\lambda_{\mathrm{s}}\mathcal{L}_{\mathrm{s}}(\mathbf{P})
+
\lambda_{\mathrm{r}}\mathcal{R}(\mathbf{P}),
\label{eq:joint_representation_objective}
\end{equation}
where $\boldsymbol{\theta}$, $\boldsymbol{\phi}$, and $\boldsymbol{\psi}$ denote the parameters associated with the joint semantic encoder, SemCom decoder, and SemSens decoder, respectively. The terms $\mathcal{L}_{\mathrm{c}}$ and $\mathcal{L}_{\mathrm{s}}$ denote the SemCom and SemSens losses, while $\mathcal{R}(\mathbf{P})$ represents the cost associated with allocating sensing resources. The coefficients $\lambda_{\mathrm{c}}$, $\lambda_{\mathrm{s}}$, and $\lambda_{\mathrm{r}}$ control the relative importance of communication performance, sensing performance, and resource efficiency.

This formulation establishes the fundamental tradeoff underlying the proposed semantic ISAC framework. A representation optimized solely for $\mathcal{L}_{\mathrm{c}}$ may discard physical information that is useful for sensing, whereas a representation optimized solely for $\mathcal{L}_{\mathrm{s}}$ may allocate excessive resources to sensing or preserve information that is unnecessary for communication. Joint optimization instead encourages the learned representation to retain information that is simultaneously valuable to both tasks while avoiding task-irrelevant information and unnecessary sensing overhead.

The proposed architecture can then be interpreted as one realization of the optimization problem in \eqref{eq:joint_representation_objective}. The joint semantic encoder generates the common representation $\mathbf{s}$ from the source information, as in \eqref{eq:eq1}, while the adaptive pilot selection mechanism determines the sensing resource configuration as in \eqref{eq:eq2}. These components jointly determine the transmitted TF representation
\begin{equation}
\mathbf{X}=\mathcal{A}\left(
f_{j}(\mathbf{x}),
f_{\boldsymbol{\theta}}(\mathbf{x},\mathbf{z})
\right) = \mathcal{A}\left(
\mathbf{s},
\mathbf{P}
\right).
\end{equation}
After propagation through the communication and sensing channels, task-specific decoders extract the required semantic information from the received observations.

Thus, the proposed framework constitutes a joint representation learning problem in which the semantic representation and sensing-resource allocation are optimized according to their collective contribution to downstream task performance. The TF waveform consequently becomes a shared information-bearing structure. The data REs convey semantic information for communication, while its sensing REs provide additional physical information required for environmental inference. Consequently, the optimal representation is inherently task- and resource-dependent. In general, there is no universally optimal semantic representation or fixed pilot pattern that is simultaneously optimal for all communication and sensing conditions. The desired representation depends on the source information, environmental state, channel conditions, and relative importance assigned to the two tasks. The joint objective in \eqref{eq:joint_representation_objective} therefore provides a general mathematical principle for learning representations that adapt to these requirements. Accordingly, the key principle of the proposed semantic ISAC framework is that the same transmitted TF representation should be jointly optimized to preserve communication-relevant semantics and sensing-relevant physical information, subject to the finite sensing and communication resources available in the waveform. The neural architecture and adaptive pilot selection mechanism introduced in the subsequent sections provide a practical realization of this principle.

\begin{figure}[h]
    \centering
    \includegraphics[width=1\linewidth]{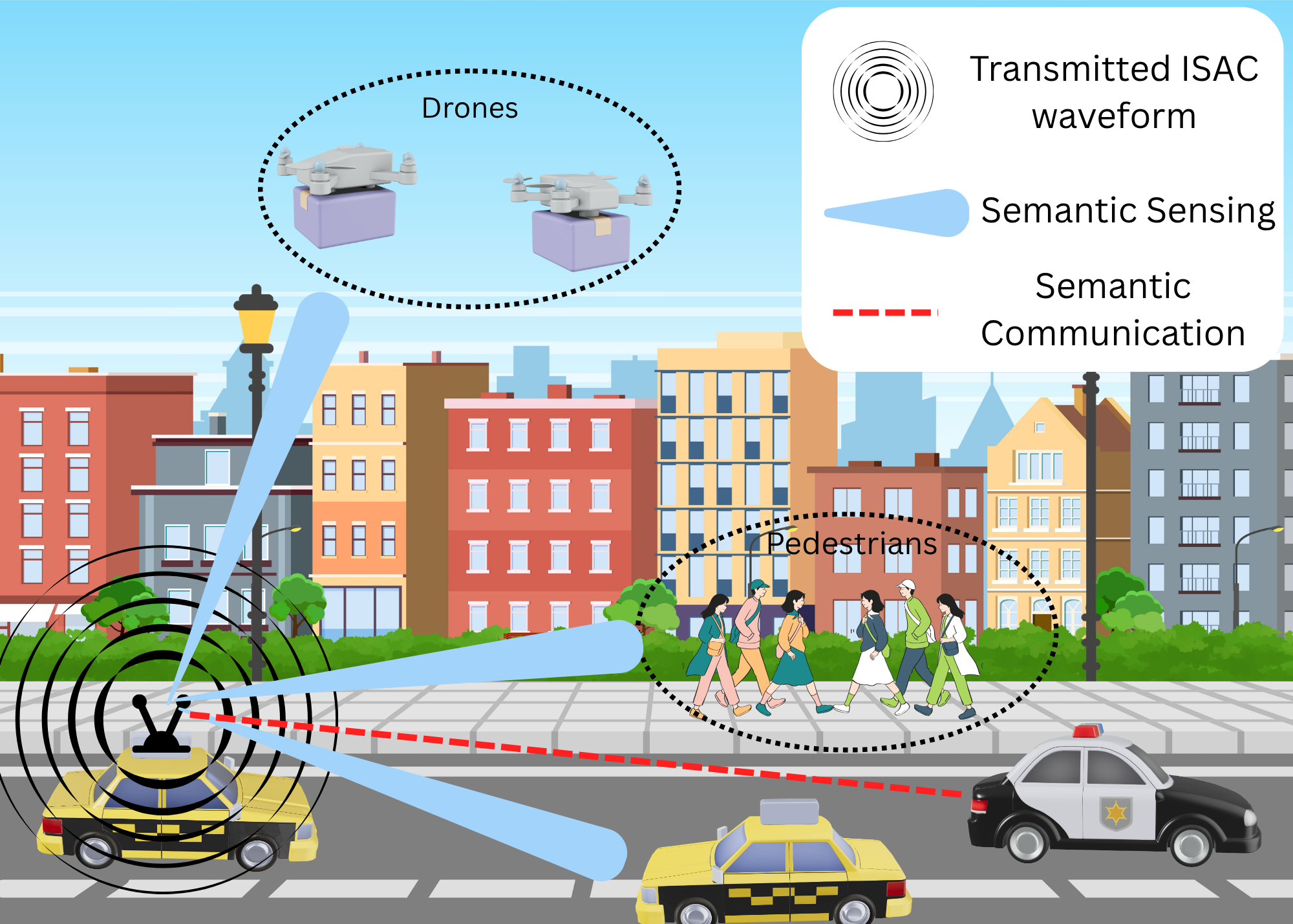}
    \caption{System model.}
    \label{fig:scenario}
\end{figure}

\section{System Model}
\label{sysmod}
In this study, we consider the vehicular ISAC scenario as illustrated in Figure \ref{fig:scenario}. The scenario uses an autonomous car equipped with an ISAC transceiver, driving through an urban road environment. The car perceives the surrounding
road scene and, to support cooperative and safe driving, it must both (i) share this
perception with the nearby \emph{neighboring vehicles} and (ii) remain aware of the
potential obstacles in its vicinity, such as pedestrians, cars, and drones. To perform these functions, the vehicle emits a \emph{single} dual-function OFDM waveform. The waveform is structured as a time-frequency (TF) resource grid, and its cells are divided into two categories: \emph{pilot} cells and \emph{data} cells. This OFDM waveform simultaneously performs two tasks:
\begin{enumerate}
\item \textbf{SemCom:} it transmits a semantic description of the road scene to the neighboring vehicles, which recover pixel-wise segmentation of the image rather than the original image. 
\item \textbf{SemSens:} the backscattered signal of the transmitted waveform is processed at the transmitting vehicle to classify the object (pedestrian,
car, or drone) and to estimate its distance.
\end{enumerate}

\subsection{Dual-Function OFDM Frame and Power Allocation} \label{sec:frame}
\subsubsection{Time--frequency grid}
The dual-function waveform is an OFDM frame whose resources are naturally organized on a two-dimensional \emph{time--frequency} (TF) grid. The frame
consists of $M$ orthogonal subcarriers along the frequency axis and $N$ OFDM symbols along the \emph{slow-time} axis, giving a total of 
\begin{equation}
N_c = N \times M
\end{equation}
resource cells. Each
cell, indexed by $(n,m)$ with $n\in\{0,\dots,N-1\}$ and
$m\in\{0,\dots,M-1\}$, carries a single complex symbol on the $m$-th
subcarrier of the $n$-th OFDM symbol. The subcarriers are separated by
$\Delta f = B/M$, where $B$ is the total occupied bandwidth and the OFDM symbols are transmitted at a slow-time  interval $T$. The simulation operates at carrier frequency $f_c$ with wavelength $\lambda = c/f_c$, where
$c$ is the speed of light. The numerical values of these parameters used in the study are
listed in Table~\ref{tab:params}.

The two axes of the time-frequency grid encode complementary parameters that can be utilized by the sensing receiver to classify the target. Consider a single reflecting path with round-trip delay $\tau$ and Doppler shift $\nu$. The delay $\tau$ produces a phase that linearly changes along the subcarriers (frequency axis), $e^{-j2\pi m\,\Delta f\,\tau}$. Similarly, the Doppler shift $\nu$ produces a phase that varies linearly along the slow-time axis $e^{\,j2\pi \nu\, nT}$. The target distance $R=c\tau/2$ is estimated along the frequency axis, whereas its velocity $v=\lambda\nu/2$ and the micro-Doppler modulation of its moving parts (if any) are estimated along the slow-time axis $n$. This delay--Doppler separation allows the pilot cells to be utilized for sensing while the data cells convey the semantic data. 
\subsubsection{Pilot--data partitioning}
We introduce deep-learning-based pilot selection using a binary \emph{pilot-placement mask} $\mathbf{P}\in\{0,1\}^{N\times M}$. The mask is not fixed; instead, a deep learning encoder assigns a \emph{selection score} to each cell and then compares the score against a threshold. Cells whose scores exceed the threshold are selected as pilot cells $[\mathbf{P}]_{n,m}=1$, and the rest function as data cells $[\mathbf{P}]_{n,m}=0$. Importantly, the threshold is a function of SNR, so the resulting number of pilots $N_p$ adapts to the channel conditions.

\subsubsection{Power allocation and transmit grid}
The transmitting vehicle operates under a fixed total transmit power $P_t$ per OFDM frame, which is shared between the pilot cells and data cells through a single \emph{power-allocation coefficient}
$\rho\in(0,1)$. This $\rho$ is the fraction of the total power assigned to the pilots. The resulting per-cell transmit energy is given by
\begin{equation}
E_p=\frac{\rho\,P_t}{N_p},
\qquad
E_d=\frac{(1-\rho)\,P_t}{N_d},
\label{eq:budget}
\end{equation}
 such that the pilot cells altogether receive a power budget of $E_pN_p=\rho P_t$ and the
data cells $E_dN_d=(1-\rho)P_t$.
Consequently, the total power budget constraint
\begin{equation}
E_pN_p+E_dN_d=P_t
\label{eq:budget-constraint}
\end{equation}
holds for any value of $\rho$ and any number of pilot cells $N_p$. The coefficient $\rho$ therefore controls the communication-sensing power tradeoff in the system, where a larger value of $\rho$ allocates more power to the pilot cells, improving sensing and channel estimation at the cost of weaker semantic data symbols, and vice versa. Under the noise-limited regime, this tradeoff admits a closed-form
characterization.

\begin{assumption}[Noise-limited power split model]\label{as:linear-power}
In the noise-limited regime, the sensing and communication task distortions
are inversely proportional to their respective allocated energies.
Specifically, there exist constants $a,b>0$, independent of $\rho$, such that
\begin{equation}
D_s(\rho)=\frac{a}{\rho},\qquad D_c(\rho)=\frac{b}{1-\rho},\qquad 0<\rho<1.
\end{equation}
\end{assumption}

\begin{remark}
Assumption~\ref{as:linear-power} follows from \eqref{eq:budget}. The pilot
estimation error variance scales as $\sigma^2/(E_pN_p)=\sigma^2/(\rho P_t)$,
giving $a=\kappa_s\sigma^2/P_t$. Similarly, the per-symbol data distortion
scales as $\sigma^2/E_d=N_d\sigma^2/((1-\rho)P_t)$, giving
$b=\kappa_c N_d\sigma^2/P_t$, with task-dependent $\kappa_s,\kappa_c>0$.
\end{remark}

\begin{lemma}[Unique Pareto-optimal power split]\label{thm:rho}
Under Assumption~\ref{as:linear-power}, let $w\in(0,1)$ denote the priority
assigned to the sensing task. The weighted distortion
$J(\rho)=wD_s(\rho)+(1-w)D_c(\rho)$ is strictly convex on $(0,1)$ and attains
a unique minimizer
\begin{equation}
\rho^\star(w)=\frac{\sqrt{wa}}{\sqrt{wa}+\sqrt{(1-w)b}}.
\label{eq:rhostar}
\end{equation}
Moreover, $\rho^\star$ is strictly increasing in $w$, with $\rho^\star\to0$
as $w\to0$ and $\rho^\star\to1$ as $w\to1$. The proof is given in
Appendix~\ref{app:proof-rho}.
\end{lemma}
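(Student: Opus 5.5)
The argument is elementary single-variable calculus on the explicit function $J(\rho)=\frac{wa}{\rho}+\frac{(1-w)b}{1-\rho}$ on $(0,1)$, where $wa>0$ and $(1-w)b>0$ because $w\in(0,1)$ and $a,b>0$ by Assumption~\ref{as:linear-power}.

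\textbf{Strict convexity.} Differentiate twice to get
\begin{equation*}
J''(\rho)=\frac{2wa}{\rho^{3}}+\frac{2(1-w)b}{(1-\rho)^{3}},
\end{equation*}
which is strictly positive on $(0,1)$. Hence $J$ is strictly convex there, and it has at most one critical point, which must be the unique global minimizer.

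\textbf{Existence of the minimizer.} Note that $J(\rho)\to+\infty$ both as $\rho\to0^+$ and as $\rho\to1^-$. Together with continuity, this gives that a minimizer exists in the open interval. It is characterized by
\begin{equation*}
J'(\rho)=-\frac{wa}{\rho^{2}}+\frac{(1-w)b}{(1-\rho)^{2}}=0 .
\end{equation*}
This is equivalent to $\frac{(1-\rho)^2}{\rho^2}=\frac{(1-w)b}{wa}$. Since $\rho\in(0,1)$, both $\rho$ and $1-\rho$ are positive, so we may take positive square roots and obtain $\frac{1-\rho}{\rho}=\sqrt{(1-w)b/(wa)}$. Solving this linear equation in $1/\rho$ yields exactly \eqref{eq:rhostar}. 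The only care needed here is the sign choice in the square root, which is fixed by positivity on $(0,1)$.

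\textbf{Monotonicity and limits.} Write
\begin{equation*}
\rho^\star(w)=\frac{1}{1+\sqrt{\dfrac{b}{a}\cdot\dfrac{1-w}{w}}}.
\end{equation*}
The map $w\mapsto(1-w)/w$ is strictly decreasing and positive on $(0,1)$, and the square root is strictly increasing. So the denominator is strictly decreasing in $w$, which makes $\rho^\star$ strictly increasing. For the limits, $(1-w)/w\to\infty$ as $w\to0$, so $\rho^\star\to0$; and $(1-w)/w\to0$ as $w\to1$, so $\rho^\star\to1$.

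\textbf{Pareto optimality.} The label ``Pareto-optimal'' in the statement can be justified by a standard observation. If some $\rho'$ weakly improved both $D_s$ and $D_c$ and strictly improved one of them relative to $\rho^\star$, then $J(\rho')<J(\rho^\star)$ for the weights $w,1-w>0$. That contradicts the minimality of $\rho^\star$.
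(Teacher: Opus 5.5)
Your proof is correct and follows the paper's argument essentially step for step: the second derivative gives strict convexity, solving $J'(\rho)=0$ with the positive root gives $\rho^\star$, and the rewriting $\rho^\star=\bigl(1+\sqrt{(b/a)(1-w)/w}\bigr)^{-1}$ gives monotonicity and the limits. There are two small differences. You obtain existence from $J\to+\infty$ at both endpoints, whereas the paper uses the sign change of $J'$. You also add a short justification of the ``Pareto-optimal'' label, which the paper leaves implicit.
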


A larger value of $\rho$ therefore allocates more power to the pilot cells,
improving sensing and channel estimation at the cost of weaker semantic data
symbols, and vice versa.

Combining the pilot placement $\mathbf{P}$ with the power allocation budget, the transmitted grid
$\mathbf{X}\in\mathbb{C}^{N\times M}$ is structured as follows:
\begin{equation}
\mathbf{X}
=\underbrace{\sqrt{E_p}\,s_p\,\mathbf{P}}_{\text{pilots (sensing)}}
+\underbrace{\sqrt{E_d}\,(\mathbf{1}-\mathbf{P})\odot\mathbf{S}}_{\text{data (communication)}},
\label{eq:grid}
\end{equation}
where $s_p$ represents the known pilot symbol with unit magnitude ($|s_p|=1$), and $\mathbf{1}$ is the all-ones matrix. $\mathbf{S}\in\mathbb{C}^{N\times M}$ is
the grid of SemCom symbols produced by the deep learning encoder. 

\subsection{Semantic Communication Model} \label{sec:comm}
To convey the road scene efficiently, we use SemCom instead of sharing raw images. The transmitting vehicle encodes only the semantic description required by
the neighboring vehicle to reproduce a pixel-wise segmentation of the scene.
\subsubsection{Semantic symbol generation}
The transmitting vehicle perceives the roadside environment through an image $\mathbf{x}\in\mathbb{R}^{3\times H\times W}$. A Swin-Transformer is used as a semantic encoder to convert the image into semantic symbols, which are complex numbers carrying the task-relevant semantics of the scene.
\begin{equation}
\mathbf{S}=\mathcal{E}_{\boldsymbol\theta}(\mathbf{x})\in\mathbb{C}^{N\times M}.
\label{eq:symbols}
\end{equation}
Because the encoder's output has a nonuniform magnitude, it is rescaled to unit average power, $\frac{1}{N_c}\sum_{n,m}|[\mathbf{S}]_{n,m}|^2=1$.
After this normalization, the average power of each data cell equals $E_d$, such that
data-cell transmit power is determined solely by the coefficient $E_d$
rather than by the image-dependent symbol magnitudes.
\subsubsection{Communication channel}
These semantic symbols are then transmitted to the neighboring vehicle, which receives the data from a \emph{direct} (line-of-sight, LOS) path and several
\emph{reflected} (non-line-of-sight, NLOS) paths that bounce off the surrounding objects. Their combination is modeled as a frequency-selective Rician channel.

\begin{equation}
\mathbf{H}^{\mathrm c}=\sqrt{\frac{K}{K+1}}\,\mathbf{H}^{\mathrm{LOS}}+\sqrt{\frac{1}{K+1}}\,\mathbf{H}^{\mathrm{NLOS}}
\label{eq:hcomm}
\end{equation}
where the Rician factor $K$ is used to control power in the direct path relative
to the reflected ones. The \emph{direct} path travels straight from the transmitting vehicle to the
neighboring vehicle without reflecting off anything
\begin{equation}
[\mathbf{H}^{\mathrm{LOS}}]_{n,m}=e^{\,j\phi_0}\,e^{\,j2\pi\nu_0 nT},
\label{eq:hlos}
\end{equation}
where $e^{\,j\phi_0}$ is a constant random phase and $e^{\,j2\pi\nu_0 nT}$ is a slow phase rotation along the slow-time axis caused by the bulk-Doppler shift  $\nu_0$ from the relative motion of two vehicles. 
The reflected signals are collected and summed before reaching the neighboring vehicle
\begin{equation}
\begin{aligned}
[\mathbf{H}^{\mathrm{NLOS}}]_{n,m}
&=\sum_{l=1}^{L_c} a_l\,e^{-j2\pi m\,\Delta f\,\tau_l}\,e^{\,j2\pi\nu_l nT},\\
&\quad\ \, a_l\sim\mathcal{CN}(0,\,1/L_c).
\end{aligned}
\label{eq:hnlos}
\end{equation}
Each path $l$ has a complex gain $a_l$ (a random amplitude and phase), a delay $\tau_l$ that produces a linear phase change across subcarriers, and a Doppler $\nu_l$ that produces a phase variation across the slow-time axis. Overall, a large $K$ means that the direct path dominates and the channel is smooth and clean, whereas a small $K$ means that the reflected signals dominate. 
\subsubsection{Semantic decoding}
The signal received by the neighboring vehicle is
\begin{equation}
\mathbf{Y}^{\mathrm c}=\mathbf{H}^{\mathrm c}\odot\mathbf{X}+\mathbf{N}^{\mathrm c},
\label{eq:yc}
\end{equation}
where the entries of $\mathbf{N}^{\mathrm c}$ are independent and identically distributed additive white Gaussian noise samples. Since the pilot and data cells experience the same channel within the grid, channel state information (CSI) can be estimated from the pilot cells. The received signal $\mathbf{Y}^{\mathrm c}\odot(\mathbf{1}-\mathbf{P})$, along with the channel estimate $\widehat{\mathbf{H}}^{\mathrm c}$, is passed to a semantic decoder $\mathcal{D}_{\boldsymbol\phi}(\cdot)$
which recovers the segmentation of the
road scene.
\begin{equation}
\hat{\mathbf{m}}
=\mathcal{D}_{\boldsymbol\phi}\big(\mathbf{Y}^{\mathrm c},\widehat{\mathbf{H}}^{\mathrm c},\mathbf{P}\big)
\in\{1,\dots,O_s\}^{H\times W}.
\label{eq:seg}
\end{equation}
Here, $O_s$ is the number of semantic classes on which the model is trained, and each pixel of $\hat{\mathbf{m}}$
is assigned one class label. This segmentation can be used by the vehicle to interpret the surrounding obstacles from another perspective.
\subsection{Semantic Sensing Model} \label{sec:sensing}
For the sensing part of semantic ISAC, the transmitting vehicle processes the echo of its own waveform to identify surrounding objects and measure their distances.

\subsubsection{Sensing channel}
The transmitted OFDM grid reflects off moving targets and returns to the transmitting vehicle. The target obstacle lies at a range of $R$ and moves with radial velocity $v$. Instead of acting as a single point, the target reflects the signal from $L$ different points on its body. The target as a whole moves at a certain velocity; however, its own parts move independently, like a person's swinging arms and legs or a
drone's spinning blades. This extra motion of the parts adds a small
time-varying pattern to the reflected signal called \emph{micro-Doppler}. The sensing channel is
\begin{equation}
[\mathbf{H}^{\mathrm s}]_{n,m}
=\sum_{l=1}^{L}\alpha_l\,
e^{-j2\pi m\,\Delta f\,\tau}\,
e^{\,j2\pi \nu\, nT}\,
e^{\,j\beta_l \sin(2\pi f_{\mu,l}\, nT+\varphi_l)},
\label{eq:hsens}
\end{equation}
Each of the exponential terms plays a different role. The delay term
$e^{-j2\pi m\,\Delta f\,\tau}$ which changes linearly across the subcarriers, encodes
the \emph{range} of the target, $e^{\,j2\pi \nu\, nT}$ changes
linearly across the slow-time axis and shows the \emph{velocity} of the target, and $e^{\,j\beta_l \sin(2\pi f_{\mu,l}\, nT+\varphi_l)}$ gives the micro-Doppler along the slow-time axis. $\tau = 2R/c$ is the round-trip delay and $\nu = 2v/\lambda$ is the bulk Doppler caused by the velocity $v$ of the target. These two are the same for all $L$ points. The term $\alpha_l\sim\mathcal{CN}(0,1/L)$ is a complex gain that indicates the reflection strength of point $l$, while $\beta_l$, $f_{\mu,l}$, and $\varphi_l$ give the strength, frequency, and phase of its micro-Doppler, respectively.

\begin{table}[t]
\centering
\caption{Simulation and model parameters.}
\label{tab:params}
\scriptsize
\setlength{\tabcolsep}{3.5pt}
\renewcommand{\arraystretch}{0.85}
\begin{tabular}{lll}
\hline
\textbf{Symbol} & \textbf{Parameter} & \textbf{Value}\\
\hline
\multicolumn{3}{l}{\textit{Grid \& Channel}}\\
$N$ & OFDM symbols & 32\\
$M$ & Subcarriers & 64\\
$N_c$ & Total cells & 2048\\
$B$ & Bandwidth & 30 MHz\\
$\Delta f$ & Subcarrier spacing & 468.75 kHz\\
$f_c$ & Carrier frequency & 3 GHz\\
$\lambda$ & Wavelength & 0.1 m\\
$T$ & Repetition interval & 1 ms\\
$L$ & Paths/target & 3\\
$K$ & Rician factor & 6 dB\\
$L_c$ & Comm. reflected paths & 3\\
$O_s$ & Segmentation classes & 11\\
$H\times W$ & Image resolution & $256\times256$\\
\hline
\multicolumn{3}{l}{\textit{Optimization}}\\
$\xi$ & Learning rate & $10^{-3}$\\
$N_b$ & Batch size & 16\\
$N_e$ & Epochs & 40\\
$\kappa$ & Gradient clip & 1.0\\
$\gamma_{\rm tr}$ & Training SNR & $[-6,20]$ dB\\
$\gamma_{\rm ev}$ & Evaluation SNR & $[-9,20]$ dB\\
\hline
\multicolumn{3}{l}{\textit{Architecture}}\\
$d_f$ & Feature dimension & 8192\\
$Q$ & Range bins & 256\\
$N_r$ & Ranging projections & 4\\
$\omega$ & Gate temperature & 0.1\\
\hline
\multicolumn{3}{l}{\textit{Objective}}\\
$\gamma_{\rm cls}$ & Recognition penalty & 0.10\\
$\gamma_{\rm rng}$ & Ranging penalty & 0.05\\
$\sigma_R$ & Range normalization & 57.7 m\\
$s_i$ & Task loss weights & Learned\\
\hline
\multicolumn{3}{l}{\textit{Sensing / Signal}}\\
$R_{\max}$ & Max. range & 200 m\\
$\beta$ & Micro-Doppler depth & 2.0 rad\\
$N_t$ & Targets/frame & 1\\
$N_p$ & Pilot budget & 256\\
$s_p$ & Pilot symbol & 1\\
$E_p,E_d$ & Pilot/data energy & 1\\
$\rho$ & Power split (evaluation) & $[0.05,0.95]$\\
\hline
\end{tabular}
\end{table}

\subsubsection{Echo processing and target recognition}
The reflected signal is received by the transmitting vehicle 
\begin{equation}
\mathbf{Y}^{\mathrm s}=\mathbf{H}^{\mathrm s}\odot\mathbf{X}+\mathbf{N}^{\mathrm s}.
\label{eq:ys}
\end{equation}
Only pilot signals are utilized for SemSens and data cells are discarded using the mask $\mathbf{P}$.
The receiver measures how the target changed the pilots by dividing the received pilot signals by the original ones. 
\begin{equation}
\widehat{\mathbf{H}}^{\mathrm s}
=\big(\mathbf{Y}^{\mathrm s}\oslash(\sqrt{E_p}\,s_p)\big)\odot\mathbf{P}.
\label{eq:hs-est}
\end{equation}
Conventional radars reconstruct the target's description; however, we perform \emph{task-oriented
SemSens}, where two task-oriented deep learning decoders map
the pilot observation \emph{directly} to the information needed by the task
\begin{equation}
    \hat c = \mathcal{G}_{\text{cls}}(\widehat{\mathbf{H}}^{\mathrm{s}}),
    \qquad
    \hat R = \mathcal{G}_{\text{rng}}(\widehat{\mathbf{H}}^{\mathrm{s}}),
\end{equation}
where $\hat{c}\in\{\text{pedestrian},\text{car},\text{drone}\}$ is the category of the target object and $\hat{R}$ is the distance from the receiver to that object.
The semantic decoders only capture the \emph{meaning} of the reflection, such as ``a pedestrian
$30$~m ahead,'' rather than reconstructing a full physical description of it. This makes the
sensing more efficient as no effort is spent on recovering unnecessary details that the task does not need.

\section{Proposed Method}
\label{proposed}
In our proposed approach, the following components are jointly learned: the semantic encoder $\mathcal{E}_{\boldsymbol\theta}$ that produces semantic data symbols, the
pilot-selection mask that decides which cells will act as pilot cells, the
communication decoder $\mathcal{D}_{\boldsymbol\phi}$ that reconstructs the segmentation of the road environment and
the two sensing decoders $\mathcal{G}_{\text{cls}}$ and
$\mathcal{G}_{\text{rng}}$ that classify the target and estimate its range.
\subsection{Training and Evaluation Dataset}
These road scenes are taken from the
Cambridge-driving Labeled Video Database (CamVid)~\cite{camvid}. It consists of images captured from the viewpoint of a driving vehicle, as illustrated in Fig.~\ref{fig:road}. Each image has annotated labels, which provide a pixel-wise segmentation of different objects present in the image,
as illustrated in Fig.~\ref{fig:gt}. Following common practice, we use the $O_s = 11$ semantic classes (such as road, building, car, pedestrian, sign, and pavement), with the remaining pixels marked as unlabeled. These images act as the source $\mathbf{x}$ for the semantic encoder, which converts them into semantic symbols and whose segmentation
is reconstructed by the neighboring vehicles.

\subsection{Semantic Encoder and Task Decoders}
A Swin-Transformer (ST)~\cite{swin} is used as a semantic encoder that extracts semantic features from the road image. For efficient data processing, we use the \emph{Swin-Tiny} variant with a patch size of $4$ and a window size of $7$, initialized with weights pretrained on ImageNet. Only the head of the ST learns and changes its weights during the training procedure, and the backbone weights are frozen. The ST processes the image hierarchically: it first
divides the $256\times256$ image into non-overlapping $4\times4$ patches and
then applies several stages of self-attention and patch-merging layers to gradually reduce the spatial size while increasing the number of feature channels. This produces an $8\times8$ feature map with $768$ channels. Unlike global attention, the ST computes self-attention only within local windows, which maintains efficiency. Inside a window, it has query, key, and value matrices 
$\mathbf{\mathfrak{Q}},\mathbf{\mathfrak{K}},
\mathbf{\mathfrak{V}}$, and the attention is computed as
\begin{equation}
    \mathrm{Attn}(\mathbf{\mathfrak{Q}},\mathbf{\mathfrak{K}},\mathbf{\mathfrak{V}})
    = \mathrm{softmax}\!\left( \frac{\mathbf{\mathfrak{Q}}\mathbf{\mathfrak{K}}^\top}{\sqrt{d}}
      + \mathbf{\mathfrak{B}} \right)\mathbf{\mathfrak{V}},
\end{equation}
where $d$ is the dimension of the features and $\mathbf{\mathfrak{B}}$ is the learned bias. 
At the receiving vehicle, the semantic decoder $\mathcal{D}_{\boldsymbol\phi}$ recovers the segmentation of the road. Unlike the semantic encoder, the decoder uses a convolutional encoder--decoder neural network trained from scratch for this task. It is given the received data cells and the channel estimate calculated using the pilot cells. The first convolutional layers compress the received input; then, a sequence of upsampling layers expands it into a segmentation map. In this way, the decoder learns to decode the road segmentation. At the transmitting vehicle, SemSens is carried out using pilot cells only. The
decoder $\mathcal{G}_{\text{cls}}$ receives the pilot observation
$\widehat{\mathbf{H}}^{\mathrm{s}}$ with its real and imaginary parts and processes it with a
convolutional neural network to classify the target. The information it relies upon is the micro-Doppler pattern created by the movement of different parts of the target, such as a pedestrian's limbs or a drone's rotating blades. 
Ranging works in a different way, as each distance imprints a different pattern on the received pilots. We construct a dictionary of $Q$ candidate distances $R_q$ and their corresponding patterns beforehand and insert them into the ranging decoder:
\begin{equation}
    D_{m,q} = e^{-j 2\pi\, m\, \Delta f\, \tau_q},
    \qquad \tau_q = \frac{2 R_q}{c},
\end{equation}
where $\tau_q$ is the delay of the echo coming back from that distance. The
decoder compares the received pilots against every pattern in the map and gives each one a match score $z_q$. The closer the match, the higher the score and the more likely it is that the target is at that distance.
A small neural network then turns these scores into a single continuous estimate 
\begin{equation}
    \hat{R} = \sum_{q=1}^{Q} \frac{\exp(z_q)}{\sum_{q'}\exp(z_{q'})}\, R_q .
\end{equation}
Only this small network and a short input layer are trained; the distance mappings remain fixed. This method gives accurate distance estimates while needing less data for training. 

\begin{figure*}[t]
    \centering
    
    \begin{subfigure}{0.66\columnwidth}
        \centering
        \includegraphics[width=1\linewidth]{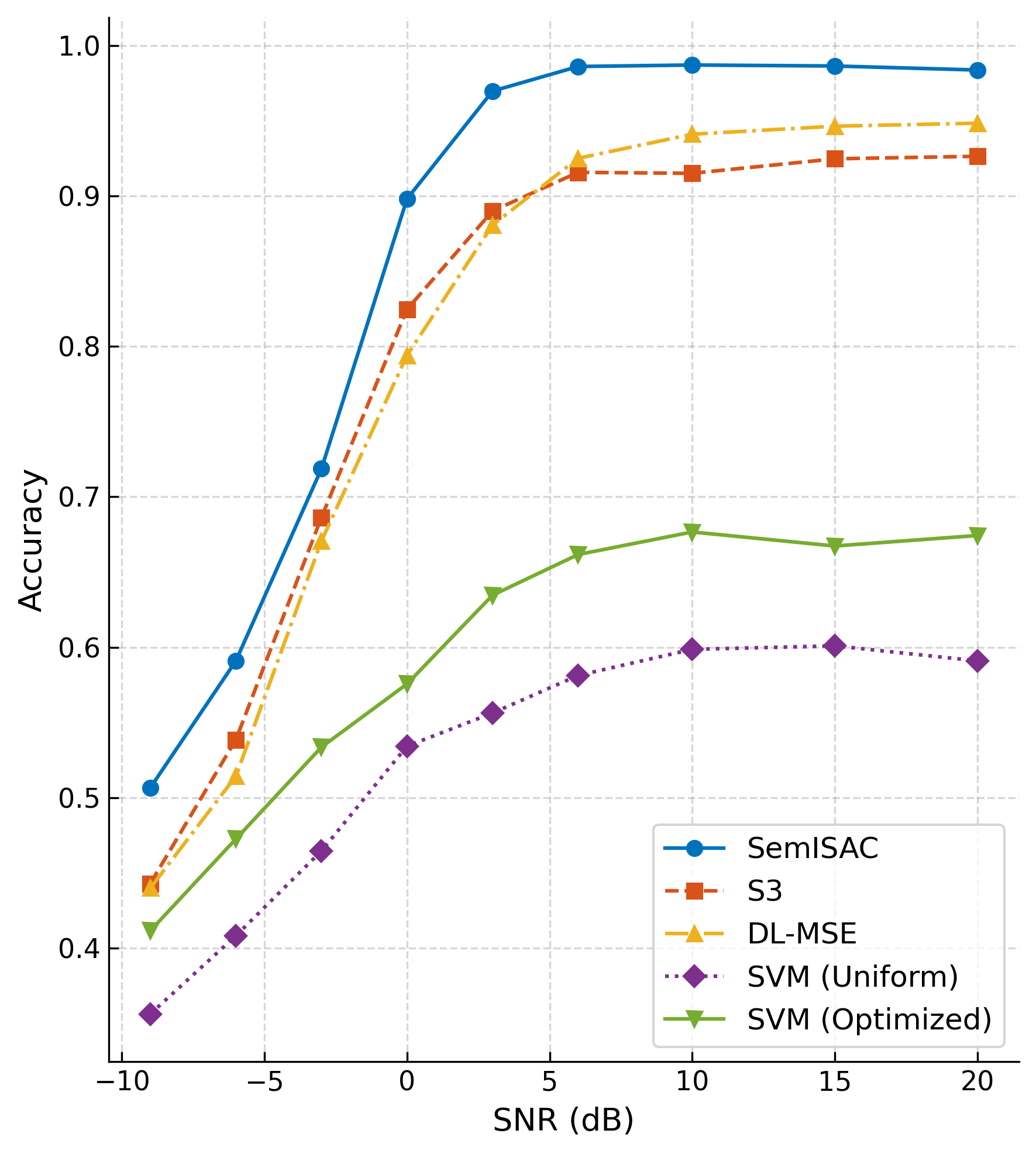}
        \caption{Sensing classification accuracy}
        \label{fig:graph_a}
    \end{subfigure}
    \begin{subfigure}{0.66\columnwidth}
        \centering
        \includegraphics[width=1\linewidth]{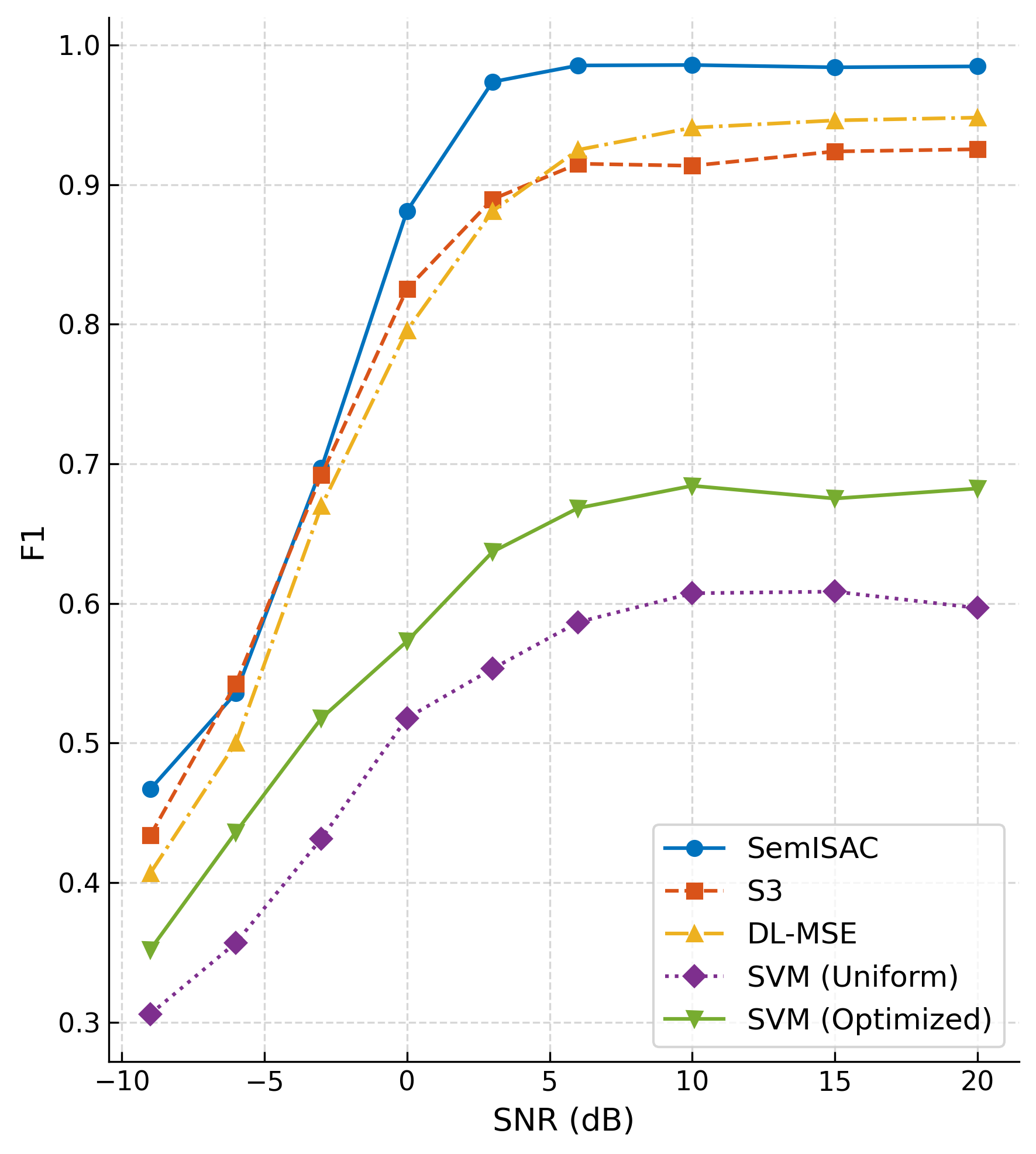}
        \caption{Sensing classification F1}
        \label{fig:graph_b}
    \end{subfigure}
    \begin{subfigure}{0.66\columnwidth}
        \centering
        \includegraphics[width=1\linewidth]{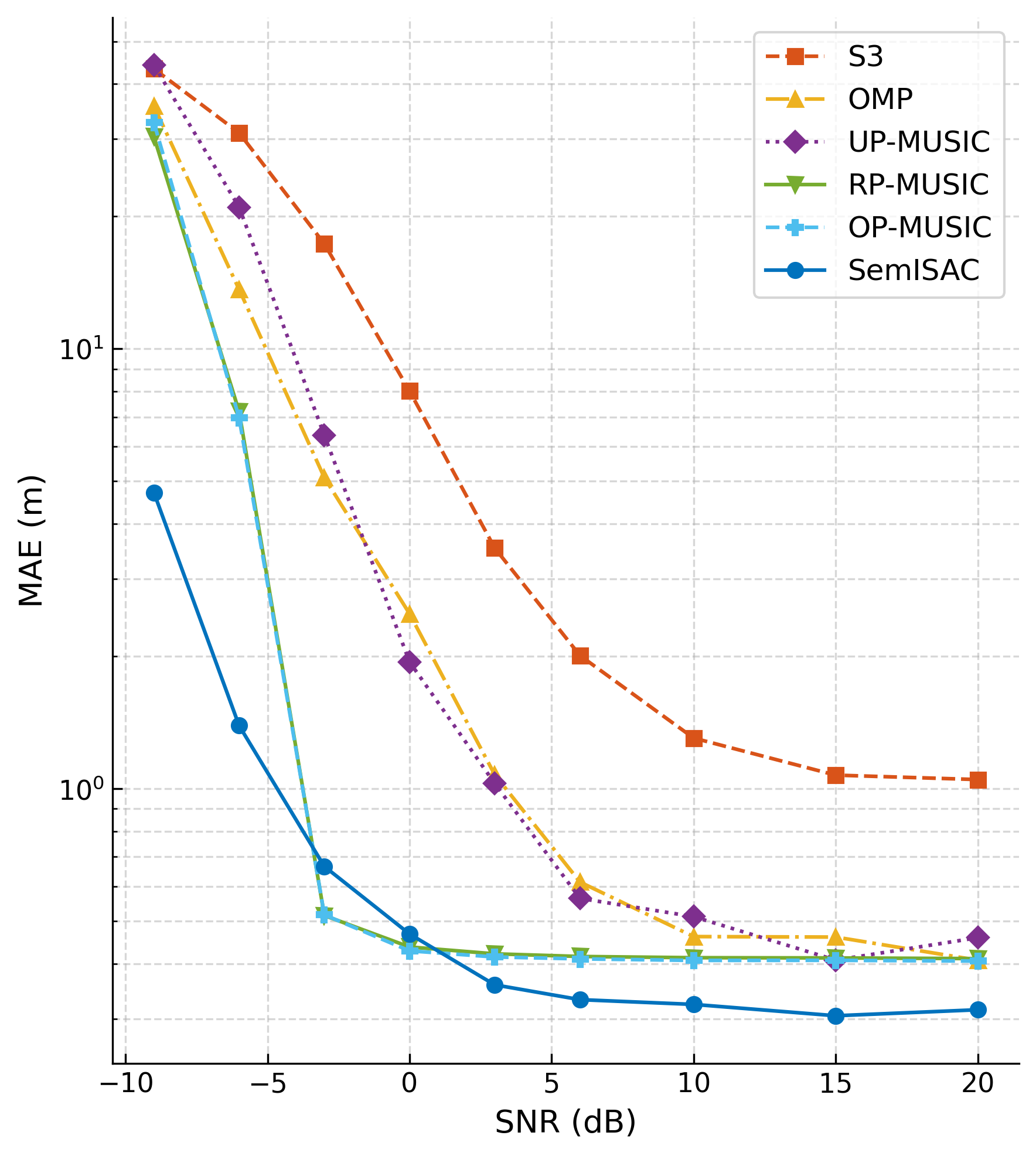}
        \caption{Sensing ranging MAE}
        \label{fig:graph_c}
    \end{subfigure}
    
    \caption{Performance of the sensing module of the SemISAC framework compared with various classification and ranging baselines.}
    \label{fig:combined_graphs}
\end{figure*}

\subsection{Adaptive Pilot Selection}
In an OFDM grid, each cell is used as a pilot when its selection score is above a threshold. These scores remain fixed across images; however, the threshold varies. A small network $\eta_\varphi(\gamma)$ considers the current signal-to-noise ratio $\gamma$ and determines the threshold, so a cell becomes a pilot only
when
\begin{equation}
    \mathbf{P}_{n,m} =
    \begin{cases}
        1, & g_{n,m} > \eta_\varphi(\gamma),\\
        0, & \text{otherwise.}
    \end{cases}
    \label{eq:gate}
\end{equation}
When the channel is
poor, the threshold is lowered by the neural network, so more cells are selected as pilot cells and the
target is measured more reliably. In contrast, when the channel is good the threshold rises and fewer cells are selected as pilots.
Because this is a hard yes-or-no choice, it does not allow any gradients to flow during backpropagation in neural networks. To
overcome this, we adopt a straight-through estimator strategy. During the backward pass of training this binary decision is treated as a sigmoid of the difference between the
score and the threshold.
\begin{equation}
        \tilde{\mathbf{P}}_{n,m}
    = \sigma\!\left( \frac{g_{n,m} - \eta_\varphi(\gamma)}{\omega} \right),
    \qquad
    \sigma(u) = \frac{1}{1 + e^{-u}},
    \label{eq:soft_gate}
\end{equation}
This smooth function provides a non-zero gradient,
allowing the neural network to learn. In this way, the model can be trained using
continuous gradients while the actual transmitted signal still uses exact
a binary pilot selection strategy.

\subsection{Training Objective}
The communication and sensing pipelines are jointly optimized with respect to three quantities: communication-task accuracy, sensing-task accuracy, and the number of pilots used. The communication quality is measured by the pixel-wise cross-entropy
$\mathcal{L}_{\text{comm}}$ between the predicted and the true road segmentation. The sensing quality is the classification cross-entropy
$\mathcal{L}_{\text{cls}}$ for target recognition; range estimation uses a normalized squared
error
\begin{equation}
    \mathcal{L}_{\text{rng}} = \Big( \frac{\hat{R}-R}{\sigma_R} \Big)^2 ,
    \qquad \sigma_R = \frac{R_{\max}}{\sqrt{12}},
\end{equation}
where $\sigma_R$ is the standard deviation of a range distributed uniformly over
$[0,R_{\max}]$. Since the communication and sensing losses differ in scale and behavior, combining them under a single loss with their original scale values would result in the improvement of only one pipeline. Therefore, a parameter $s_i$ learned by the neural network is used to adjust the weight according to the task. With this, the two losses of a pipeline are
combined as $\sum_i \big( \tfrac{1}{2} e^{-s_i}\mathcal{L}_i + \tfrac{1}{2}s_i
\big)$. 
Finally, to stop the network from using more pilots than it needs, we add a cost proportional to the fraction of pilot cells, $N_p/N_c$, so
that extra pilots are only kept when they are worth it. Combining these terms, the recognition and ranging pipelines minimize

\begin{align}
    \mathcal{L}_{\text{class}} &=
        \sum_{i\in\{\text{comm},\text{cls}\}}
        \Big( \tfrac{1}{2} e^{-s_i}\mathcal{L}_i + \tfrac{1}{2} s_i \Big)
        + \gamma_{\text{cls}}\, \frac{N_p}{N_c},\\
    \mathcal{L}_{\text{range}} &=
        \sum_{i\in\{\text{comm},\text{rng}\}}
        \Big( \tfrac{1}{2} e^{-s_i}\mathcal{L}_i + \tfrac{1}{2} s_i \Big)
        + \gamma_{\text{rng}}\, \frac{N_p}{N_c}.
\end{align}
The complete system is trained end-to-end, and the two communication and sensing pipelines are optimized in the same loop, each with its own Adam optimizer and gradient-norm clipping. In each mini-batch of training images, the SNR is drawn uniformly and a new channel is generated, so that the trained model learns to adjust its
pilot count across diverse operating conditions rather than at one fixed condition. Throughout the training, the power-allocation coefficient $\rho$ is kept constant and its
effect is analyzed separately by changing it at evaluation time.

\begin{figure*}[t]
    \centering
    
    \begin{subfigure}{1\columnwidth}
        \centering
        \includegraphics[width=0.8\linewidth]{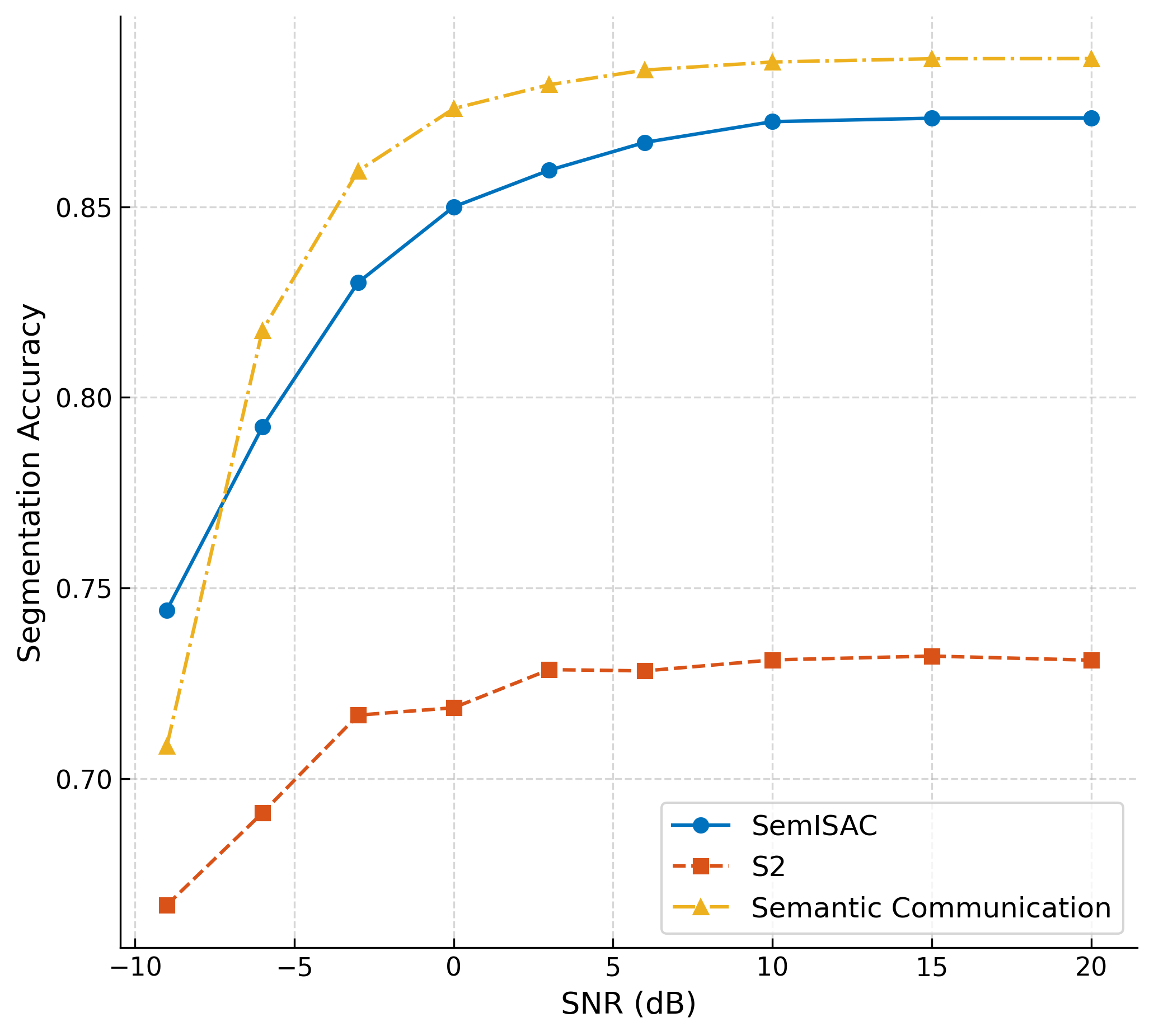}
        \caption{Performance comparison of SemISAC with baselines.}
        \label{fig:seg1}
    \end{subfigure}
    \begin{subfigure}{1\columnwidth}
        \centering
        \includegraphics[width=0.8\linewidth]{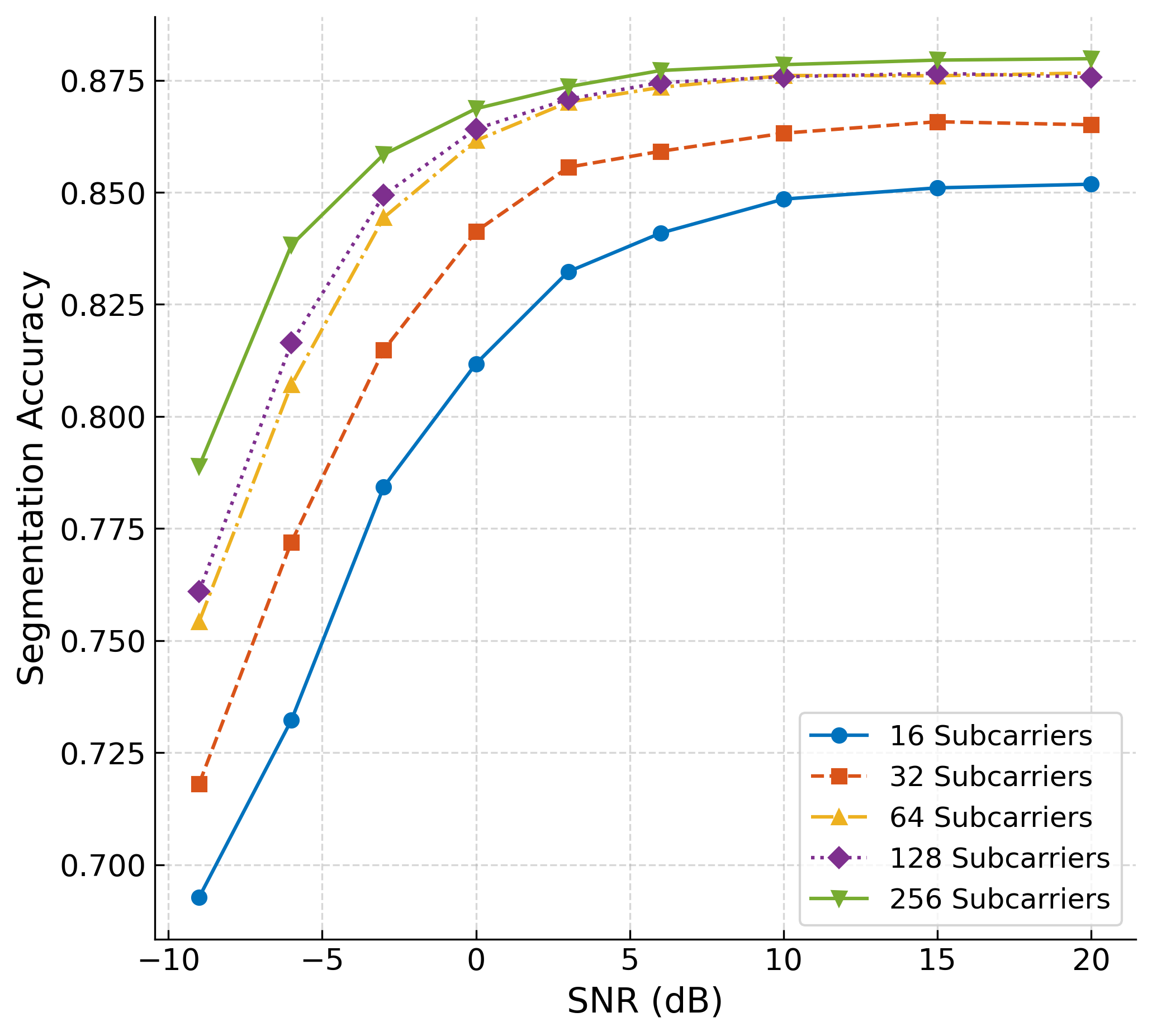}
        \caption{Effect of changing the number of subcarriers.}
        \label{fig:seg2}
    \end{subfigure}
    \caption{Evaluation of semantic communication in terms of segmentation accuracy.}
\end{figure*}

\section{Simulation Setup \& Results}
\label{results}
We evaluate the proposed approach integrating SemSens and SemCom using different metrics, different signal-to-noise ratios (SNRs), and a range of channels. The time--frequency grid, channel, and model parameters used
in the simulations are listed in
Table~\ref{tab:params}.

Performance is measured with different metrics for the communication and sensing branches. For the communication
task, the recovered segmentation is measured by pixel accuracy, mean
intersection-over-union (mIoU), and macro-averaged F1 score over the $O_s$
road classes.
For the sensing task, target recognition is measured by
classification accuracy and macro-F1, while range estimation is measured by the
mean absolute error (MAE) in meters. The three target classes---pedestrian, car, and drone---and their bulk-velocity
and micro-Doppler parameters used in the simulation are summarized in Table~\ref{tab:classes}.
We compare our proposed approach with several baseline configurations that use different combinations of conventional and semantic modules. To ensure a fair comparison, all baseline methods are reproduced in the same simulation environment and evaluated using identical channel conditions, waveform parameters, target distributions, and SNR levels.
\begin{table}[t]
\centering
\caption{Target objects.}
\label{tab:classes}
\begin{tabular}{lcc}
\hline
Class & Bulk velocity $v$ (m/s) & Micro-Doppler $f_\mu$ (Hz) \\
\hline
Pedestrian & $0$--$2$   & $1$--$3$    \\
Car        & $10$--$20$ & $0$ (rigid) \\
Drone      & $5$--$10$  & $80$--$120$ \\
\hline
\end{tabular}
\end{table}

We consider multiple scenarios with different semantic
capabilities. We first build a basic ISAC system that serves as the base structure of the framework; the system is then modified by adding semantic modules. \emph{Scenario~2}
(S2) replaces the conventional communication module of ISAC with SemCom while 
sensing remains conventional; this scenario is built on the basis of~\cite{semISAC3}. 
\emph{Scenario~3} (S3) instead replaces conventional sensing with a SemSens module reproduced from the study in~\cite{semsensmain}.
Finally, \emph{Scenario~4} (S4) is our proposed system, which
integrates \emph{both} a SemCom module and a SemSens
module. This setup lets us analyze the
benefit contributed by each semantic module individually---S2 for SemCom and
S3 for SemSens---and the importance of combining them into a single unified design.

\subsection{Semantic Communication Based ISAC (S2)}
S2 adopts the SemCom design of~\cite{semISAC3}, in which a deep joint source--channel coding (DJSC) \cite{djsc} network jointly performs image compression and channel coding. The semantic symbols are
quantized into bits and then mapped to a conventional
modulation scheme, such as PSK or QAM. The symbols are also assigned different
quantization levels, modulation orders, and transmit powers according to their
semantic importance. At the receiver, the DJSC decoder maps the received symbols to the pixel-wise segmentation of the scene, without reconstructing the original image.

\subsection{Semantic Sensing (S3)}
\label{subsec:s3} S3 uses the sensing module design from~\cite{semsensmain}. It uses a learned pilot-selection mechanism with
task-specific sensing decoders. It also uses a dictionary-based ranging head trained end-to-end for the sensing tasks that estimates the delay of the target. To evaluate the sensing task thoroughly, we reproduce the sensing baselines used in this paper. 
For \emph{target recognition}:
\begin{itemize}
\item \textbf{Full Channel Reconstruction (Perfect CSI):} an ideal reference
that uses the same backbone neural network as the proposed method but is provided with perfect channel state information during decoding.
\item \textbf{DL-MSE:} a two-stage scheme that first reconstructs the
channel by minimizing the mean-squared error and then classifies the target from the signal reconstruction. 
\item \textbf{Uniform SVM \& Optimized SVM:} classical radar-processing pipelines in which handcrafted statistical features of the Doppler spectrum (mean, standard deviation, skewness, and kurtosis) are first extracted and then classified by an SVM model. Uniform SVM uses evenly placed pilots across the grid, and optimized SVM uses a pilot placement optimized for the Doppler pattern.
\end{itemize}
\begin{figure}[t]
    \centering

    \begin{subfigure}[t]{0.25\linewidth}
        \centering
        \includegraphics[width=\linewidth]{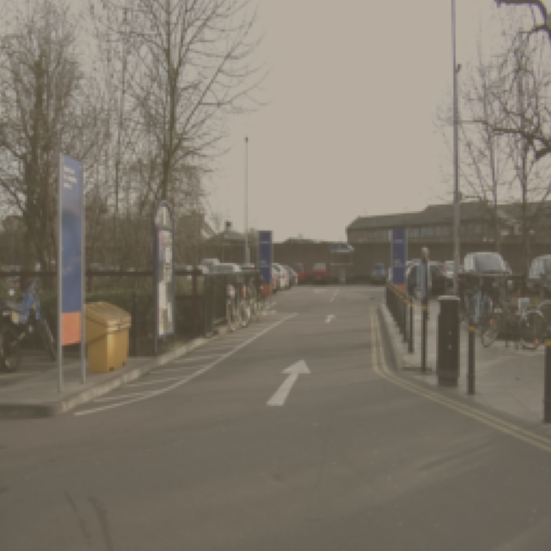}
        \caption{Image}
        \label{fig:road}
    \end{subfigure}
    \hfill
    \begin{subfigure}[t]{0.25\linewidth}
        \centering
        \includegraphics[width=\linewidth]{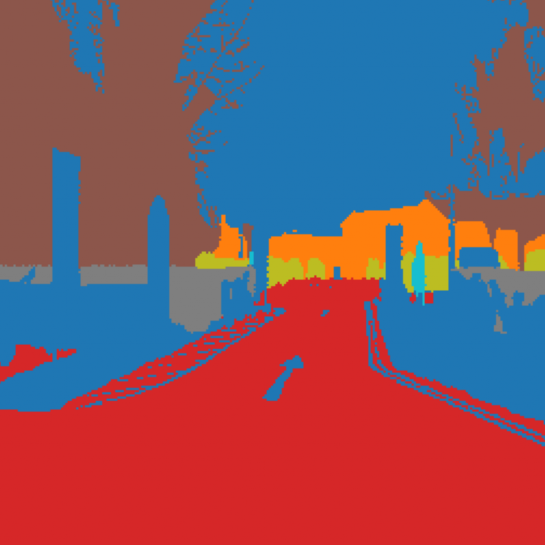}
        \caption{Ground truth segmentation}
        \label{fig:gt}
    \end{subfigure}
    \hfill
    \begin{subfigure}[t]{0.25\linewidth}
        \centering
        \includegraphics[width=\linewidth]{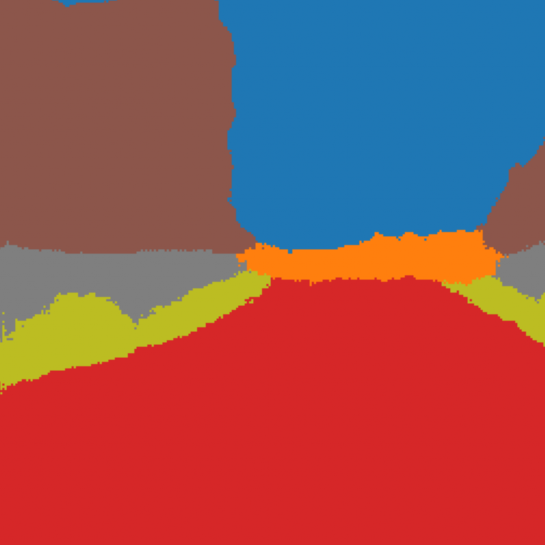}
        \caption{SemISAC}
        \label{fig:roadsemisac}
    \end{subfigure}

    \vspace{0.5em}

    \begin{subfigure}[t]{0.25\linewidth}
        \centering
        \includegraphics[width=\linewidth]{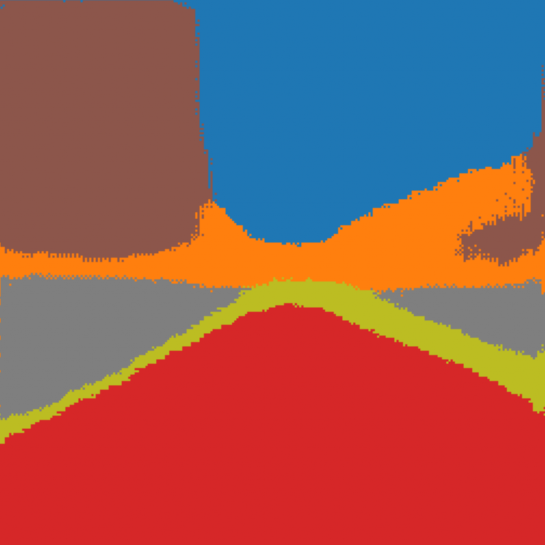}
        \caption{Simple semantic communication}
        \label{fig:roadsemcom}
    \end{subfigure}
    \hspace{20pt}
    \begin{subfigure}[t]{0.25\linewidth}
        \centering
        \includegraphics[width=\linewidth]{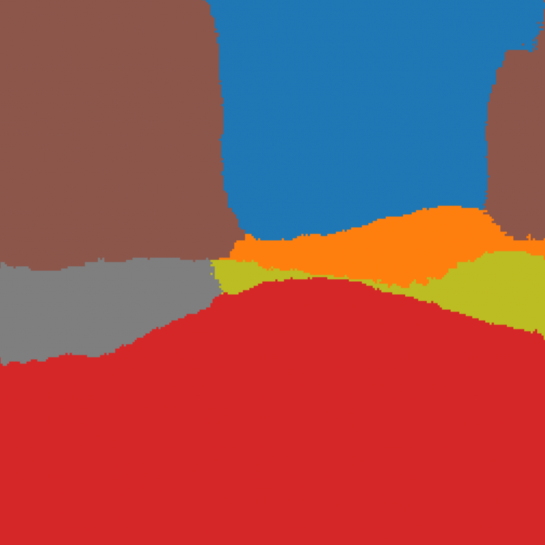}
        \caption{S2}
        \label{fig:roads2}
    \end{subfigure}

    \caption{An instance from the dataset, its ground truth segmentation, and the performance of all approaches.}
    \label{fig:main_figure}
\end{figure}
For \emph{range estimation}:
\begin{itemize}
\item \textbf{OMP:} orthogonal matching pursuit, which uses a list of possible
delays (each delay corresponds to a distance). It goes through that list and picks the delay that best matches the received echo, thereby estimating the target distance.
\item \textbf{UP-MUSIC}, \textbf{RP-MUSIC}, and \textbf{OP-MUSIC:} three different versions of the multiple signal classification (MUSIC) method, which split the received pilots into a target
part and a noise part, scan every possible delay, and retain the delays that match the target part to estimate the target distance.
\item \textbf{Blind:} a reference baseline that uses no channel state information and predicts the
mean range blindly.
\end{itemize}

\subsection{Performance Evaluation}
Figure \ref{fig:graph_a} illustrates the performance of our proposed methodology in terms of the sensing accuracy of object classification. In addition to the raw accuracy score, we also compare the F1 score in Figure \ref{fig:graph_b} for classification to provide a more comprehensive evaluation, as F1 considers both precision and recall. We can observe that the SemSens module of the proposed SemISAC methodology outperforms nearly all other methods across the board, including the simple SemSens methodology and DL-based channel reconstruction. The improved performance stems from the adaptive pilot design, which allows the proposed methodology to identify and favor pilot positions that provide the most informative observations for downstream tasks at test time, allowing the pilot resources to be allocated according to their sensing utility instead of under a fixed scheme, such as a uniform or heuristic allocation. The observed performance improvement over S3 and the DL-based method suggests that adaptive pilot placement is able to extract task-relevant information more effectively than fixed pilot configurations. Additionally, the conventional feature-based methods involving handcrafted features show the most limited performance. Compared to the adaptive learning methods, this underperformance can be attributed to the limitations of the handcrafted features, as they lack the ability to identify and utilize more informative features. The enhanced performance of the proposed methodology and the general trends are visible in both the accuracy and F1 graphs, further substantiating the results.

The sensing performance of the proposed methodology is also evaluated in terms of the mean absolute error (MAE) achieved in order to gauge the performance of the proposed method in continuous parameter estimation scenarios. As shown in Figure \ref{fig:graph_c}, RP-MUSIC and OP-MUSIC exhibit the least error among all the MUSIC variants, whereas S3 consistently shows the largest error. The proposed method performs optimally in this metric as well, recording an MAE of less than 10~m even at an SNR of $-9$~dB. The performance gap between adaptive and non-adaptive pilot designs is much more evident in this experiment, which can be attributed to the higher sensitivity of continuous parameter estimation to small measurement errors. While classification can remain correct despite moderate disturbances in the received signal, range estimation is much more sensitive to the accuracy of the estimated signal parameters. As a result, the improved measurement quality obtained by adaptive pilots is reflected more directly in the reduced MAE, making the advantage more pronounced.
\begin{figure}[t]
    \centering
    \includegraphics[width=0.8\linewidth]{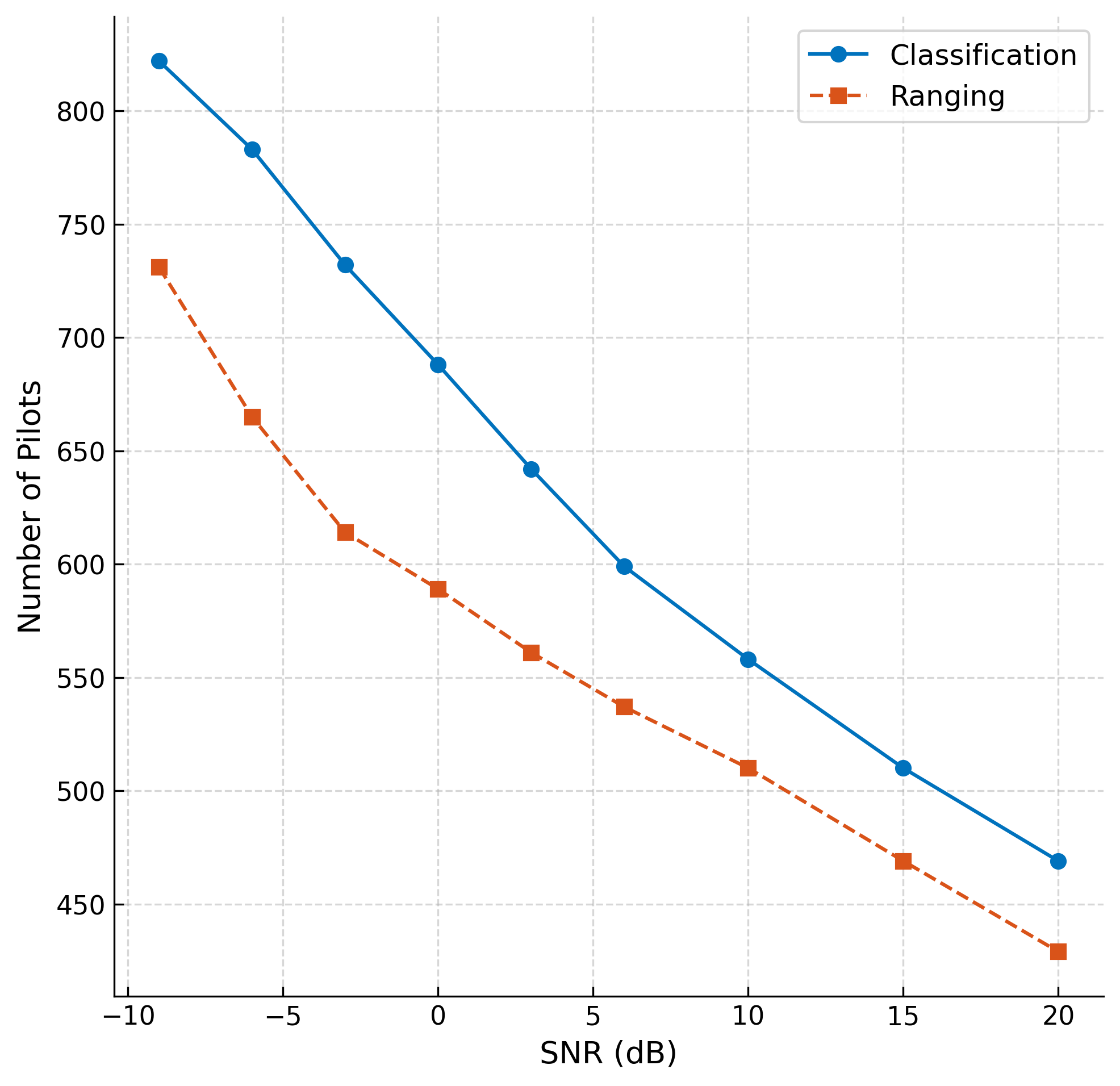}
    \caption{Number of pilots selected at different SNRs.}
    \label{fig:pilots_vs_snr}
\end{figure}

\begin{figure*}[t]
    \centering
    \begin{subfigure}{0.66\columnwidth}
        \centering
        \includegraphics[width=1\linewidth]{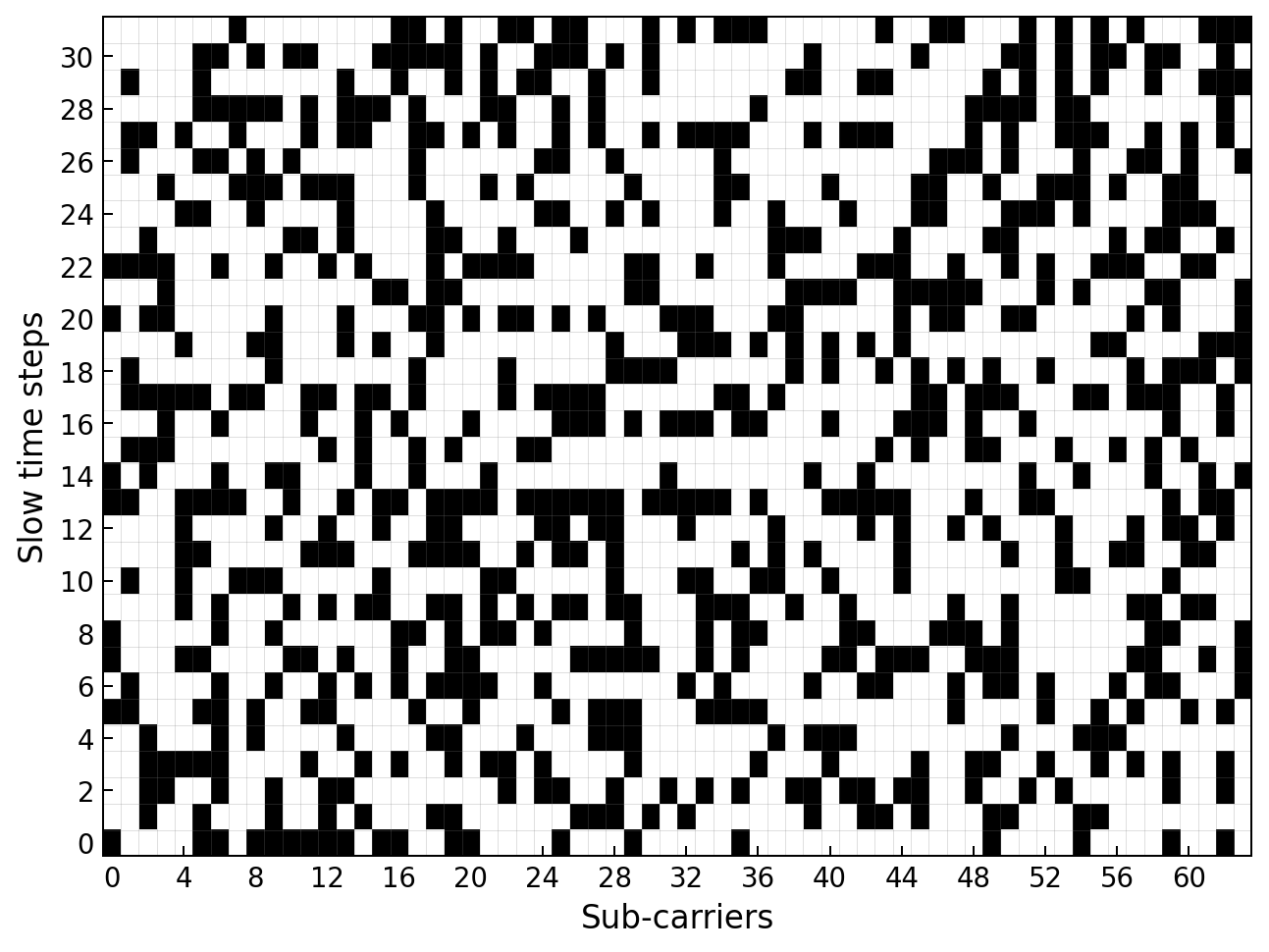}
        \caption{Ranging -9 dB}
        \label{fig:srm-9}
    \end{subfigure}
    \begin{subfigure}{0.66\columnwidth}
        \centering
        \includegraphics[width=1\linewidth]{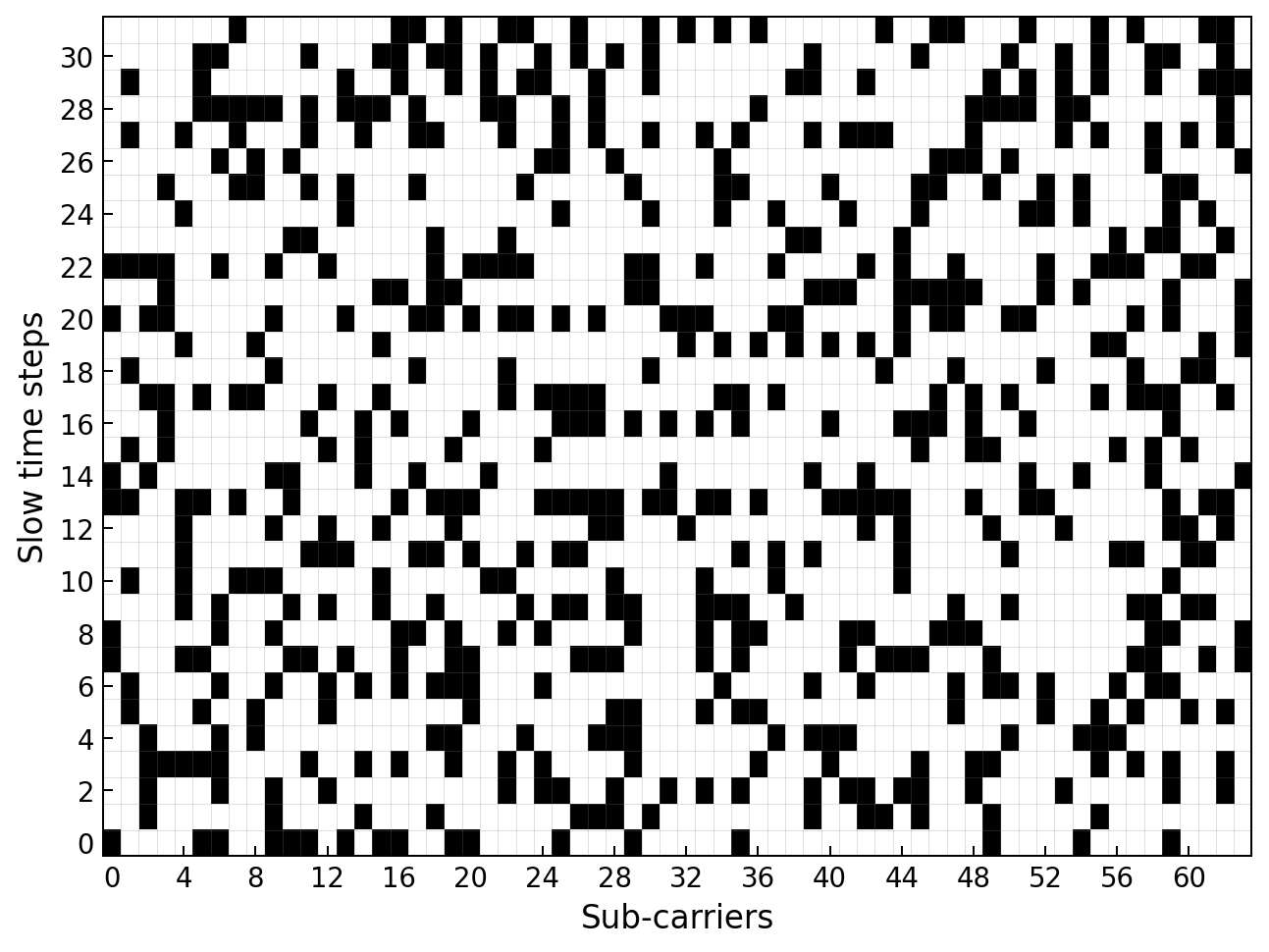}
        \caption{Ranging 0 dB}
        \label{fig:srm0}
    \end{subfigure}
    \begin{subfigure}{0.66\columnwidth}
        \centering
        \includegraphics[width=1\linewidth]{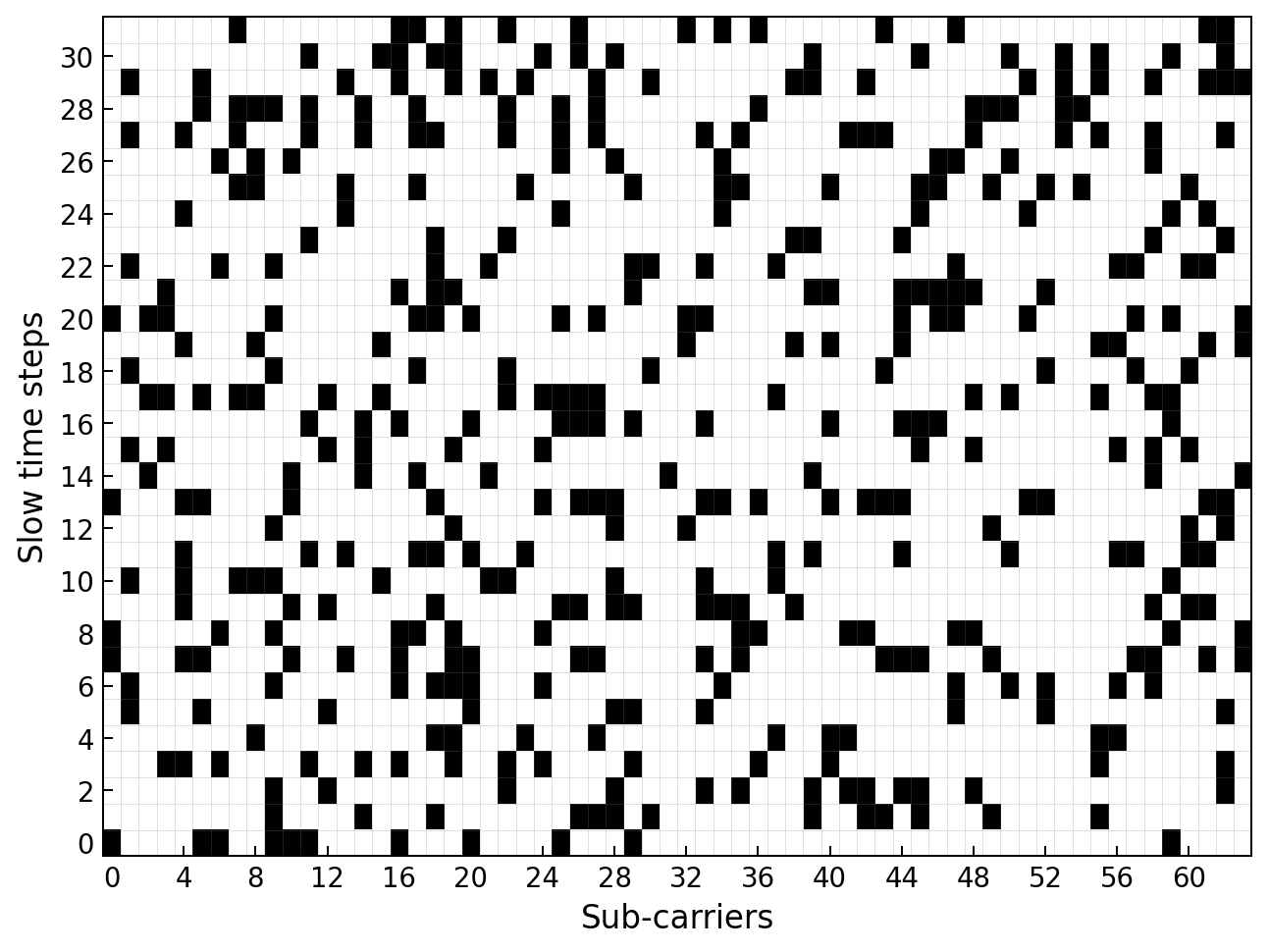}
        \caption{Ranging 20 dB}
        \label{fig:srm20}
    \end{subfigure}
    \begin{subfigure}{0.66\columnwidth}
        \centering
        \includegraphics[width=1\linewidth]{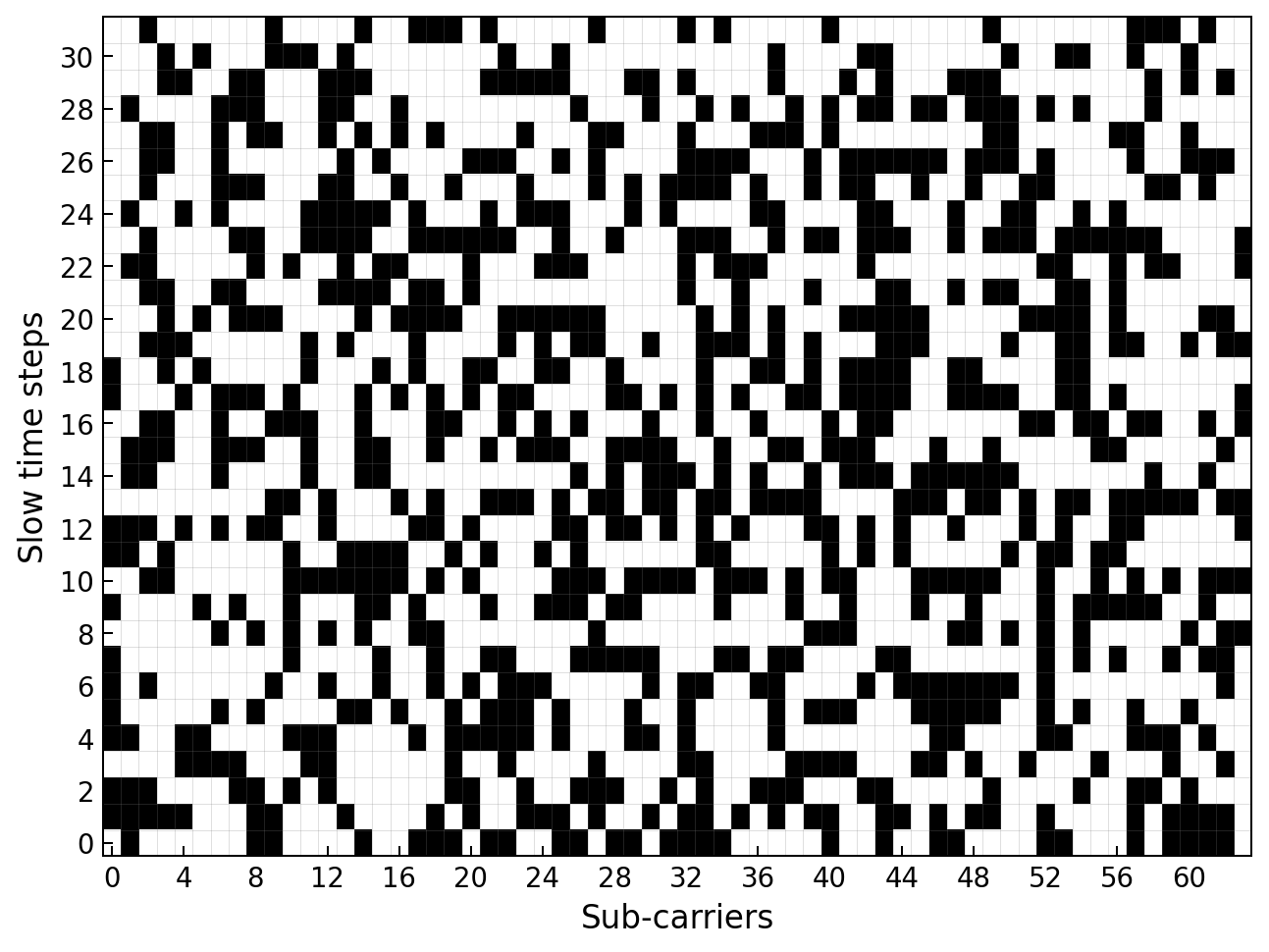}
        \caption{Classification -9 dB}
        \label{fig:scm-9}
    \end{subfigure}
    \begin{subfigure}{0.66\columnwidth}
        \centering
        \includegraphics[width=1\linewidth]{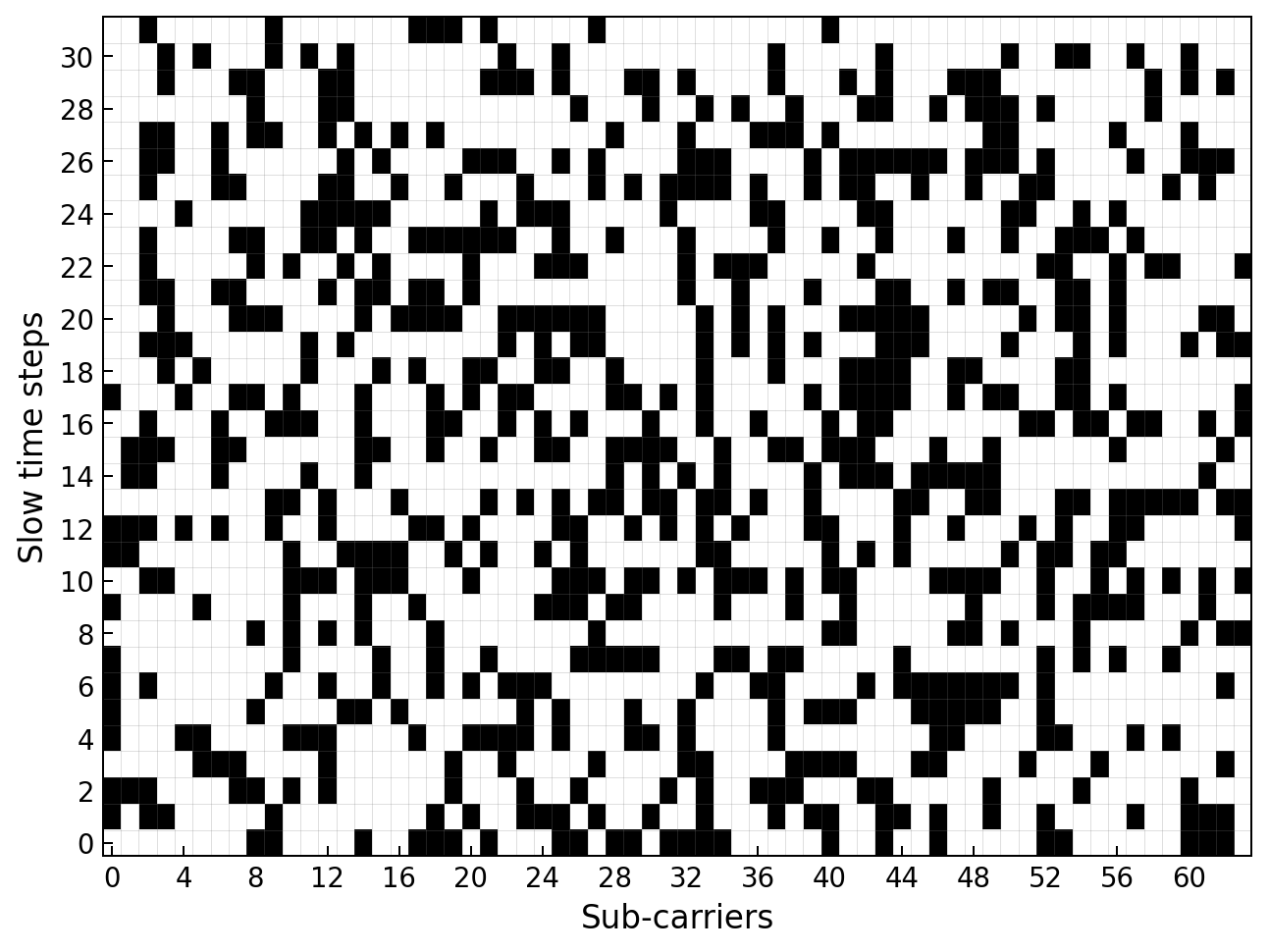}
        \caption{Classification 0 dB}
        \label{fig:scm0}
    \end{subfigure}
    \begin{subfigure}{0.66\columnwidth}
        \centering
        \includegraphics[width=1\linewidth]{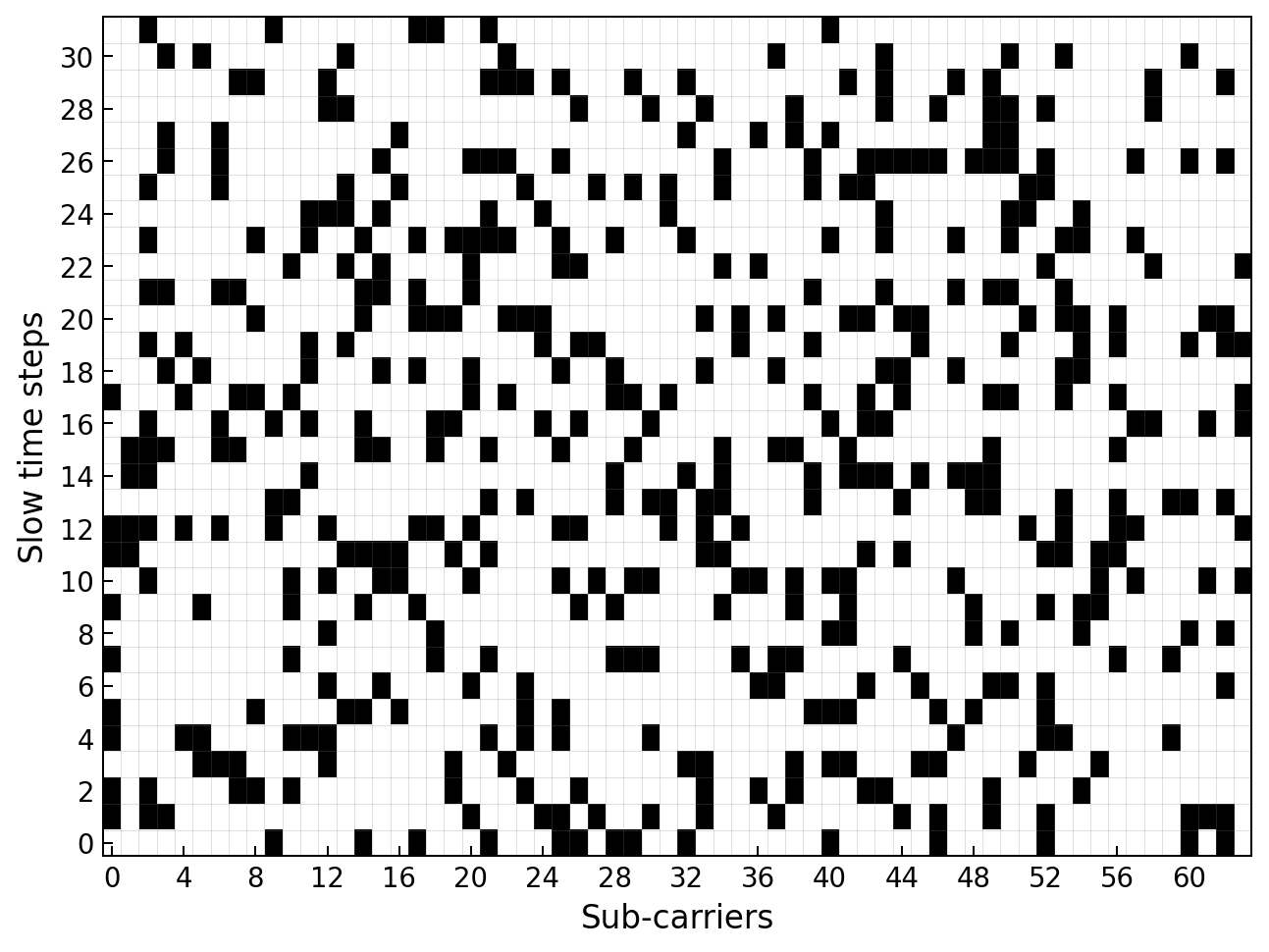}
        \caption{Classification 20 dB}
        \label{fig:scm20}
    \end{subfigure}
    
    \caption{The adaptive pilot placement of the proposed SemISAC methodology for both ranging and classification sensing scenarios at different SNRs.}
    \label{fig:masks}
\end{figure*}

Figure \ref{fig:seg1} presents the performance of the SemCom module of our proposed SemISAC methodology compared to S2 and simple SemCom. The simple SemCom method serves as a communication-only reference representing the performance achievable in the absence of sensing-related constraints. It provides a practical ceiling for comparison, as it isolates the communication task from the additional requirements imposed by sensing in an ISAC system. As shown by the results, the SemISAC approach achieves performance comparable to simple SemCom despite simultaneously performing sensing tasks. The improvement over S2 is driven by the Swin Transformer architecture used in the backbone of our proposed approach. At extremely low SNRs, the proposed SemISAC approach even outperforms simple SemCom due to its adaptive pilot design, which allows it to adjust its waveform based on channel and environmental conditions. A practical example of the image data, ground truth, and performance of all the approaches is provided in Figure \ref{fig:main_figure}, which shows that SemISAC and simple SemCom maintain high segmentation accuracy, whereas S2 introduces significant distortions in the task results.

Although the simple SemCom approach also uses the Swin architecture, the performance gains of simple SemCom over SemISAC are not as pronounced as the gain of SemISAC over S2. To further investigate this behavior, we evaluate the proposed approach with different numbers of subcarriers, as shown in Figure \ref{fig:seg2}. The results demonstrate diminishing returns after increasing the number of subcarriers beyond 64. Going from 16 to 32 and from 32 to 64 subcarriers leads to a percentage increase of 3.64\% and 2.41\%, respectively, whereas going beyond 64 subcarriers leads to a percentage increase of less than 1\%. This can be linked to the semantic expressive power of the resource elements. With 64 subcarriers, the resource elements are sufficient to encode the maximum amount of semantic information required for successful decoding at the receiver. Therefore, increasing the number of subcarriers beyond 64 does not result in a significant improvement in performance. This explains the minor gap shown by simple SemCom in Figure \ref{fig:seg1}, as both approaches are evaluated with 64 subcarriers, and accommodating pilot signals for sensing in SemISAC does not cause a significant drop in the expressive power of the SemISAC encoder.

Figure \ref{fig:masks} presents the adaptive pilot placement of the proposed SemISAC framework for both classification and ranging tasks at different signal strengths. Within the given $N \times M$ grid, the black cells represent the pilot REs and the white cells represent the data-bearing REs. It can be observed that as the SNR improves, the density of pilots decreases under favorable channel conditions. This behavior is shown in Figures \ref{fig:srm-9} through \ref{fig:srm20} for ranging and \ref{fig:scm-9} through \ref{fig:scm20} for classification and reflects the adaptive nature of the proposed strategy. When the received signal quality is high, reliable sensing information can be obtained from fewer pilots, allowing a larger fraction of the REs to be allocated to data transmission. Conversely, at lower SNRs, additional pilots are required to provide reliable observations. The observed variation and the decreasing trend in pilot density, as shown by both \ref{fig:masks} and \ref{fig:pilots_vs_snr}, demonstrate that the proposed methodology dynamically trades pilot density against signal quality instead of relying on a fixed pilot pattern. This observed behavior is consistent with the analytical characterization established in Proposition~\ref{thm:npilots}. Specifically, the proposition shows that under the considered distortion model, the optimal pilot count $N_p^\star(\gamma)$ is strictly decreasing with SNR and approaches zero in the high-SNR regime. The decreasing pilot density observed in the learned masks and in Figure~\ref{fig:pilots_vs_snr} therefore provides empirical evidence that the proposed SNR-conditioned pilot selection captures the qualitative trend predicted by the analytical model.

\section{Conclusion}
\label{conc}
In this study, we proposed SemISAC, a unified framework that performs both SemCom and SemSens using a single dual-function waveform. SemISAC uses a shared Swin Transformer encoder that converts the input road-scene image into semantic symbols placed on the OFDM grid. An adaptive pilot-selection network, trained with a straight-through estimator, determines how many pilots to use and where to place them based on the SNR. At the receiver, the communication decoder reconstructs the road-scene segmentation from the received data cells, while the channel state information is estimated from the pilot cells. The sensing decoders at the transmitting vehicle process the channel estimate obtained from the pilot cells and output the target class and range. Adaptive pilot selection is a key contributor to the performance gains: the selected pilot count decreases steadily as the SNR improves, allowing the system to use fewer pilots when the signal is strong. The evaluation results show that SemISAC achieves an accuracy close to that of the SemCom-only baseline and clearly outperforms the S2 baseline. It also outperforms the semantic and conventional sensing baselines, including S3, DL-MSE, SVM, OMP, and MUSIC, in classifying pedestrians, cars, and drones and in estimating their ranges.

%
\appendices
\section{Proof of Lemma~\ref{thm:rho}}\label{app:proof-rho}
\begin{proof}
Write $J(\rho)=\dfrac{wa}{\rho}+\dfrac{(1-w)b}{1-\rho}$. Then
$J''(\rho)=\dfrac{2wa}{\rho^{3}}+\dfrac{2(1-w)b}{(1-\rho)^{3}}>0$ on $(0,1)$, so $J$ is strictly convex. Since $J'(\rho)\to-\infty$ as $\rho\to0^+$ and $J'(\rho)\to+\infty$ as $\rho\to1^-$, a unique interior stationary point exists. Setting $J'(\rho)=-\dfrac{wa}{\rho^{2}}+\dfrac{(1-w)b}{(1-\rho)^{2}}=0$ gives $\dfrac{\rho}{1-\rho}=\sqrt{\dfrac{wa}{(1-w)b}}$, and solving yields \eqref{eq:rhostar}. Writing $\rho^\star=\bigl(1+\sqrt{b/a}\,\sqrt{(1-w)/w}\bigr)^{-1}$, the factor $\sqrt{(1-w)/w}$ is strictly decreasing in $w$, so $\rho^\star$ is strictly increasing in $w$. The stated limiting values follow directly.

This concludes the proof.
\end{proof}

\section{Proof of Proposition~\ref{thm:npilots}}\label{app:proof-npilots}
\begin{proof}
The optimization has the same strictly convex form as in Lemma~\ref{thm:rho}. Specifically, defining $A=\dfrac{wa}{\gamma}$, $C=\dfrac{(1-w)c}{\log(1+\gamma)}$ gives $J(N_p)=\dfrac{A}{N_p}+\dfrac{C}{N_c-N_p}$. Hence, by the same allocation argument as in Lemma~\ref{thm:rho}, the unique minimizer is $N_p^\star=N_c\sqrt{A}/(\sqrt{A}+\sqrt{C})$, which yields \eqref{eq:npstar}. It remains to establish the SNR dependence. Writing $N_p^\star=N_c(1+\sqrt{C/A})^{-1}$, we have $\dfrac{C}{A}=\dfrac{(1-w)c}{wa}\,g(\gamma)$, where $g(\gamma)=\dfrac{\gamma}{\log(1+\gamma)}$. Then $g'(\gamma)=[\log(1+\gamma)-\tfrac{\gamma}{1+\gamma}]/\log^2(1+\gamma)$. Let $h(\gamma)=\log(1+\gamma)-\tfrac{\gamma}{1+\gamma}$. Since $h(0)=0$ and $h'(\gamma)=\tfrac{\gamma}{(1+\gamma)^2}>0$ for $\gamma > 0$, we have $h(\gamma)>0$, and therefore $g'(\gamma)>0$. Thus $C/A$ is strictly increasing in $\gamma$, implying that $N_p^\star(\gamma)$ is strictly decreasing. Finally, $\dfrac{\gamma}{\log(1+\gamma)} \to \infty$ as $\gamma \to \infty$, so $C/A\to\infty$ and, consequently, $N_p^\star(\gamma) \to 0$.
\end{proof}




\ifCLASSOPTIONcaptionsoff
  \newpage
\fi

\end{document}